\algrenewcommand\alglinenumber[1]{\tiny #1:}
\newtheorem{theorem}{\textbf{Theorem}}
\newcommand\BibTeX{{\rmfamily B\kern-.05em \textsc{i\kern-.025em b}\kern-.08em
T\kern-.1667em\lower.7ex\hbox{E}\kern-.125emX}}
\begin{document}

\runningheads{B. S. Peres and O. Goussevskaia}{MHCL: IPv6 Multihop Host Configuration for Low-Power Wireless Networks}

\articletype{RESEARCH ARTICLE}

\title{MHCL: IPv6 Multihop Host Configuration for Low-Power Wireless Networks}

\author{Bruna S. Peres\corrauth ~and Olga Goussevskaia}

\address{Department of Computer Science, Universidade Federal de Minas Gerais (UFMG), Belo Horizonte - MG, Brazil}

\corraddr{Bruna S. Peres. DCC/ICEx/UFMG, Av. Antonio Carlos, 6627, Belo Horizonte, MG,
31270-901, Brazil. E-mail: bperes@dcc.ufmg.br}

\begin{abstract}
Standard routing protocols for Low power and Lossy Networks
are typically designed to optimize bottom-up data flows, by maintaining a
cycle-free network topology. The advantage of such topologies is low memory
footprint to store routing information (only the parent's address needs to me
known by each node). The disadvantage is that other communication
patterns, like top-down and bidirectional data flows, are not
easily implemented. In this work we propose MHCL: IPv6 Multihop Host
Configuration for Low-Power Wireless Networks. MHCL employs hierarchical
address allocation that explores cycle-free network topologies and aims to
enable top-down data communication with low message overhead and memory footprint.
We evaluated the performance of MHCL both analytically and through simulations.
We implemented MHCL as a subroutine of RPL protocol on Contiki OS and showed
that it significantly improves top-down message delivery in RPL, while using a
constant amount of memory (i.e., independent of network size) and being efficient in terms of setup time and number of control
messages.

\end{abstract}

\keywords{multihop host configuration; tree broadcast; tree
convergecast; routing; 6LoWPAN; IPv6; RPL}

\maketitle

%\footnotetext[2]{Please ensure that you use the most up to date
%class file, available from the WCM Home Page at\\
%{\tiny\href{http://www3.interscience.wiley.com/journal/76507157/home}{\texttt{http://www3.interscience.wiley.com/journal/76507157/home}}}}

\section{Introduction}

%TODO: TALK ABOUT 6LOWPAN AND IEEE802.15.4
 
The main function of a low-power wireless network is usually some sort of data collection. Applications based on data collection are plentiful, examples include environment monitoring~\cite{Tolle:2005}, field surveillance~\cite{Vicaire:2009},
and scientific observation~\cite{Werner-Allen:2006}. In order to perform data collection, a cycle-free graph structure is typically maintained and a convergecast is implemented on this network topology.
Many operating systems for sensor nodes (e.g. Tiny OS~\cite{Levis2005} and Contiki OS~\cite{Dunkels:2004}) implement mechanisms (e.g. Collection Tree Protocol (CTP)~\cite{Fonseca:2009} or the IPv6 Routing Protocol for Low-Power and Lossy Networks (RPL)~\cite{rfc6550}) to maintain cycle-free network topologies to support data-collection applications.

In some situations, however, data flow in the opposite direction 
--- from the root, or the border router, towards the leaves
becomes necessary. These
situations might arise in network configuration routines, specific data queries, or applications that require reliable data transmissions with
acknowledgments. For example, in the context of Internet of Thinks, or smart
homes, one can imagine a low-power wireless network connecting the
appliances and other electric devices of a home to a gateway, or a border
router. While away from home, one might wish to connect to the 
refrigerator at home to check whether one is out of milk and should buy some on
the way home. In order to deliver this request, a message would need to be sent
to the IPv6 address of the refrigerator. This message would first be delivered
to the home gateway and then be routed downwards through the multihop wireless
network connecting the appliances. Each node in this network would act as a
router and decide which way to forward the message, so that it reaches the refrigerator.

Even though top-down data traffic is typically not the main target of low-power multihop wireless networks, it is enabled by some popular routing
protocols in an alternative operating mode, so that routes are optimized
for bottom-up traffic, but top-down traffic is still possible.
% and top-down messages might take longer paths or not be delivered due to memory constraints.
In the RPL protocol, for example, nodes
maintain a DAG (Directed Acyclic Graph) topology, in which each node keeps in memory a small list of parent nodes, to which it forwards data upwards to the root. 
If a node wants to act not just as a
source but a destination, it sends a special message upwards through the DAG,
and the intermediate nodes (if they operate in the so-called storing mode) add
an entry for this destination in the routing table created specifically for
downward routing. The size of each such table is potentially $O(n)$, where $n$
is the size of the subtree rooted at the routing node. 
Given that memory space may be highly constrained in low-power
wireless networks, all routes often cannot be stored, and packets must therefore
be dropped. This results in high message loss and high memory footprint. In
order to reduce the size of the routing table, it would be desirable to
aggregate several addresses of closely located destination nodes in a single
routing table entry. However, because IPv6 addresses typically have their last
64 bits derived from the unique MAC address of each node, they are basically random and contain no information about the network topology.

In this work we propose MHCL: a Multihop Host Configuration strategy that 
explores cycle-free network structures in Low-power wireless networks to
generate and assign IPv6 addresses to nodes. The objective is to enable 
efficient and robust top-down data traffic with low memory footprint, i.e., small routing tables. 
We propose and analyze two strategies: Greedy (MHCL-G) and Aggregation (MHCL-A) address
allocation. In the MHCL-G approach, each node, starting with the root and
terminating with the leaves, partitions the address space available or assigned
to it by a parent (in case of the root, it could be the last 64 bits of the
gateway's IPv6 address) into equally-sized address ranges and assigns them
to its children (leaving a reserve address pool for possible future
connections).
The MHCL-A approach contains an initial aggregation phase, in which nodes
compute their number of descendants.\footnote{In case of DAG, where a node
might have multiple parents, the preferred parent is used for descendants
aggregation purposes.} Once the root receives the aggregated number of descendants of each
of its (immediate) children, it allocates an address range of size
proportional to the size of the subtree rooted at each child node (also
leaving a reserve address space for future connections). Once a node receives an
address range from its (preferred) parent, it partitions it among its own
children, until all nodes receive an IPv6 address.

MHCL partitions the available IPv6 address space hierarchically, based on the
topology of the underlying wireless network. As a result, routing tables for
top-down data traffic are of size $O(k)$, where $k$ is the number of (immediate)
children of each routing node in a cycle-free network topology. The protocol
uses Trickle-inspired~\cite{Levis:2004} timers to adjust the communication
routines and quickly adapt to dynamics in the network's topology but not flood
the network with too many control messages. We implemented MHCL as a subroutine
of RPL protocol in Contiki OS and analyzed its performance through simulations
using a network simulator provided by Contiki~\cite{Eriksson:2009}. Our results
indicate that MHCL significantly improves top-down message delivery of
RPL, while being efficient in terms of setup time and number of control
messages. We also evaluated how robust our strategy is to some network dynamics,
such as transient link and node failures.

The rest of this paper is organized as follows. In Section~\ref{sec:6lowpan}, we
briefly describe how address configuration and top-down data flows are
implemented in 6LoWPAN. In Section~\ref{sec:mhcl} we describe the MHCL protocol:
we define two IPv6 address partitioning strategies and describe the
communication routines and messages used.
In Section~\ref{sec:analysis} we analyze the time and message complexity of MHCL. In Section~\ref{sec:results}
 we present our experimental results. In Section~\ref{sec:related} we discuss related work, and in
Section~\ref{sec:conclusion} finish with some concluding remarks.

\section{6LoWPAN}
\label{sec:6lowpan}

6LoWPAN (IPv6 over Low power Wireless Personal Area Networks) are comprised of
low-cost wireless communication devices with limited resources, compatible with the IEEE 802.15.4 standard. In these networks, packet losses can be very frequent and links can become unusable for some time due to a variety of reasons. %~\cite{Vasseur10} Using follwong websites as reference:
%
% https://www.google.com.br/url?sa=t&rct=j&q=&esrc=s&source=web&cd=5&cad=rja&uact=8&ved=0CDwQFjAE&url=http%3A%2F%2Fwww.embedded.com%2Felectronics-blogs%2Fembedded-cloud-talkers%2F4236873%2FHow-to-setup-a-6LoWPAN-network&ei=FJiAVPKjNIOUNpXhgIgO&usg=AFQjCNFKw6uOxeLVOR1tVPgR0YyMw-rIqA&sig2=jZc8PgIUne4SHamzZIMEdg&bvm=bv.80642063,d.eXY
%
% http://tinyos.stanford.edu/tinyos-wiki/index.php/BLIP_2.0_Tutorial#Address_Assignment
%\subsection{IPv6 address assignment}

\textbf{IPv6 address assignment:} Between the (global) IPv6 network and a
6LoWPAN there is a border router, or gateway, that performs IPv6 header
compression. The 128 bits of an IPv6 address are divided into two parts: the
network prefix (64 bits) and the host address (64 bits). The 6LoWPAN header
compression mechanism omits the network prefix bits, since they are fixed for a
given 6LoWPAN. Since the remaining 64 bits can handle a very large address space
(up to $1.8 \times 10^{19}$), 6LoWPAN offers compression options for the host
address (typically only 16 bits are used). Similarly to IPv4, there are two ways
to configure an address: static and dynamic. Static address allocation has to be
configured manually at each device. Dynamic address allocation is performed by
DHCPv6 (Dynamic Host Configuration Protocol version 6). Since the network is multihop, this requires a DHCPv6
server and several retransmission agents to be deployed somewhere in the
network. Once the DHCPv6 server assigns a global address to a node, it starts
running the routing protocol (e.g. RPL).

\textbf{RPL (Acyclic Topology):} RPL~\cite{rfc6550} is a distance vector routing protocol specifically designed for 6LoWPAN. The foundation of the protocol is building and maintaining an acyclic network topology, directed at the root (the border router).  Therefore, a
Destination-Oriented Directed Acyclic Graph (DODAG) is created. The graph is constructed by the use of an Objective Function (OF) which defines
how the routing metric is computed (by using the expected number of transmissions (ETX) or the current amount of battery power of a node, for example). The root starts the DODAG construction by advertising messages of type DIO (DODAG Information Object). When a node decides to join the DODAG, it saves a list of potential parents, ordered according to respective link qualities, and sets the first parent in the list as its preferred parent. Then, the node computes its rank, which is a metric that indicates the coordinates of the node in the network hierarchy and is used to avoid routing loops. 

During the setup phase and in the event of a failure, nodes update their preferred parents. The Trickle algorithm~\cite{Levis:2004} is used to adapt the sending rate of messages to the network topology dynamics. In a network with stable links the control messages will be rare, whereas an environment in which the topology changes frequently will cause RPL to send control information more often. When this process stabilizes, the data collection can start.

\textbf{RPL (Downward Routing):} RPL specifies two modes of operation for downward routing: storing and non-storing. Both modes require messages of type DAO (Destination Advertisement Object) to be sent by nodes that wish to act as destinations. In the storing mode, this information is forwarded upward to preferred parents and each RPL router stores routes to its destinations in a downward routing table. The memory necessary to store such a table is $O(n)$, where $n$ is the size of the subtree rooted at the node. Since memory capacity is constrained in 6LoWPAN, often only a small portion of all routes can be stored in the routing tables, causing packets to be dropped. In the non-storing mode, nodes do not have storage capacity, so the DAO messages are sent directly to the root. The root is the only node that stores the downward routing information. To send messages, source routing is then used. Consequently, the network suffers greater fragmentation risk and data message loss, and the capacity of the network is excessively consumed by source routing~\cite{Clausen2013}.

\section{MHCL: IPv6 Multihop Host Configuration for Low-Power Wireless Networks}
\label{sec:mhcl}

The goal of MHCL (IPv6 Multihop Host Configuration for Low-power wireless
networks) is to implement an IPv6 address allocation scheme for downward
traffic that has low memory footprint, i.e.,
needs small routing tables.
The address space available to the border router, for instance the 64 least-significant bits
of the IPv6 address (or a compressed 16-bit representation of the latter), is
hierarchically partitioned among nodes connected to the border router through a
multihop cycle-free topology (implemented by standard protocols, such as
RPL~\cite{rfc6550} or CTP~\cite{Fonseca:2009}). Each node receives an address range from its parent and
partitions it among its children, until all nodes receive an address. Since the
address allocation is performed in a hierarchical way, the routing table
of each node can have $k$ entries, where $k$ is the number of its
(direct) children. Each routing table entry aggregates the addresses of all
nodes in the subtree rooted at the corresponding child-node.

In order to decide how the available address space is partitioned, nodes need to
collect information about the topology of the network. Once a \textit{stable}
view of the network's topology is achieved, the root starts distributing address
ranges downwards to all nodes. Note that the notion of stability is
important to implement a coherent address space partition. Therefore, MHCL has
an initial set-up phase, during which information about the topology is
progressively updated, until a (pre-defined) sufficiently long period of time
goes by without any topology changes. To implement this adaptive approach, we
use Trickle-inspired timers to trigger messages, as we explain in detail in
Section~\ref{subsec:timers}. Note that, once the network reaches an initial
state of stability, later changes to topology are expected to be of local
nature, caused by a link or a node failure, or a change in the preferred parent
of a node. In these cases, the address allocation does not need to be
updated, since local mechanisms of (best effort) message resubmission
can be used to improve message delivery rates.
This is confirmed by our experimental results in Section~\ref{sec:results}. In the (atypical)
event of non-local and permanent topology changes in the network, the MHCL
algorithm has to be restarted by the root.

In this section, we describe the main components of MHCL: the IPv6 address
space partitioning (Section~\ref{subsec:addr-space}), the message types
(Section~\ref{subsec:messages}), the communication routines and timers (Section~\ref{subsec:timers}), and the routing table and
forwarding routine (Section~\ref{subsec:routing}).

\subsection{IPv6 address space partition}\label{subsec:addr-space}

The routing operation in computer networks is strongly influenced by the network
topology and the semantics of the addresses. When the network is big enough to
make the hosts unable to maintain entries in a routing table for every router
and host in the network, usually, the addresses are hierarchically
structured~\cite{Tsuchiya92}. One way to encode hierarchical addresses is to
pre-assign positions and lengths to each field. The hierarchical addressing
allows prefix aggregation, which enables contiguous IPv6 addresses or prefixes
to be aggregated into a single prefix~\cite{6765943}. However, if the
hierarchical structure of the topology doesn't match the hierarchical structure
of the address, some addresses may be under-utilized (in a sparse network, for
example), while other may be overflowed. That address space insufficiency is the
main problem with fixed-position address fields. Because of this, instead of
working with prefixes, we decided to perform a hierarchical partition of the
available address space among nodes connected to the border router through a
multihop cycle-free topology. We propose two address space partitioning strategies: Greedy (MHCL-G) and Aggregate (MHCL-A).

\begin{figure}
\centering
\includegraphics[width=.7\columnwidth]{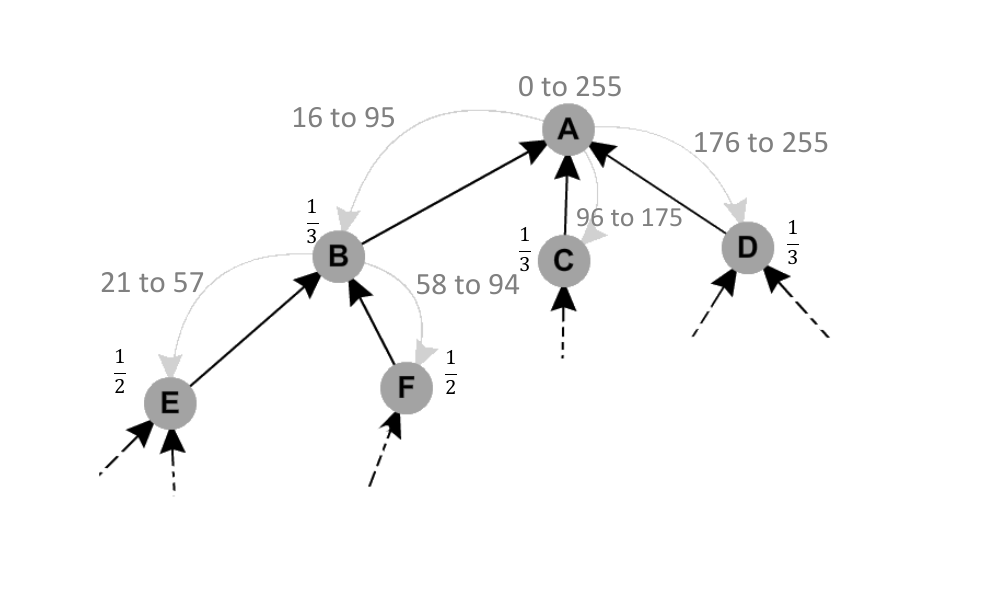}
            
						\caption{MHCL-G (Greedy address space partition). 8-bit address space
            at the root and $6.25\%$ reserve pool for future/delayed
            connections.}\label{fig:addrPartitionGreedy}
\end{figure}

\begin{figure}
\centering
\includegraphics[width=.7\columnwidth]{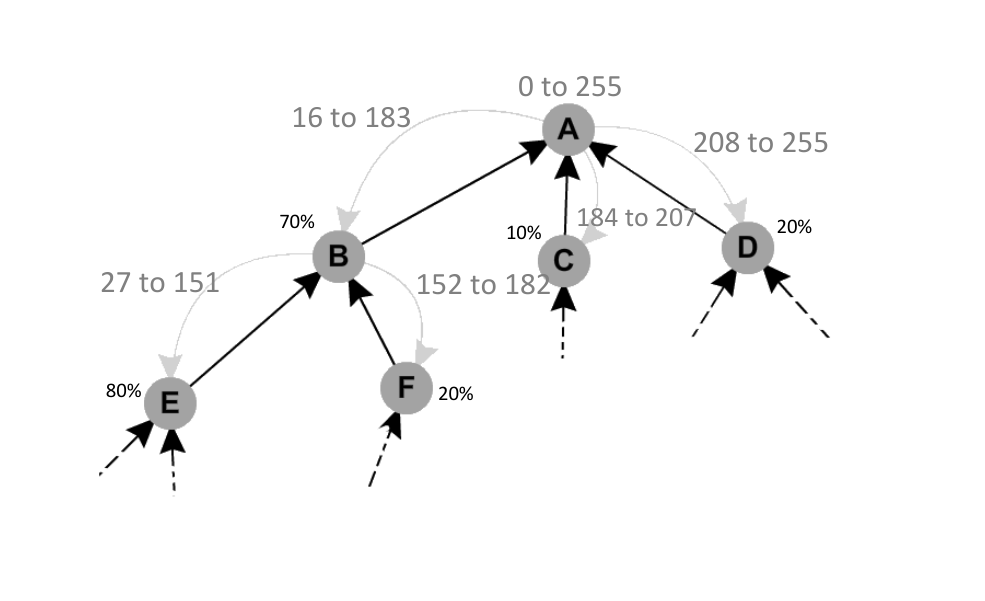}
\caption{MHCL-A (Aggregate address space partition). 8-bit address space
            at the root and $6.25\%$ reserve pool for future/delayed
            connections.}\label{fig:addrPartitionAggregate}
\end{figure}

In the greedy approach, each node $n_i$ counts the number $k$ of its (direct)
children (nodes whose preferred parent is $n_i$); waits until it is assigned an
address range by it parent (if $n_i$ is not the root); partitions the
address space available to it into $k$ address ranges of equal size, leaving a
reserve pool, say $r\%$ of its address space for possible future/delayed
connections (parameter $r$ can be configured according to the expected number
of newly deployed nodes in the network); and then distributes the resulting
address ranges to its children (see Figure~\ref{fig:addrPartitionGreedy}).
Note that MHCL-G uses only one-hop topology information, which allows it to have
a relatively fast set-up phase.

In the aggregate approach, each node $n_i$ counts the total number of its
descendants, i.e., the size of the subtree rooted at itself, and
propagates it to its parent.\footnote{In order to avoid double counting nodes
with multiple parents in a DAG structure, such as the one used in RPL, only the
preferred parent of each node is considered.} Moreover, $n_i$ saves the number
of descendants of each child. Once the root has received the (aggregate) number of
descendants of each child, it partitions the available address space into $k$
ranges of size proportional to the size of the subtree rooted at each child,
also leaving a portion of $r\%$ as reserve (see
Figure~\ref{fig:addrPartitionAggregate}). Each node $n_i$ repeats the space partitioning procedure upon receiving its own address space from the parent and
sends the proportional address ranges to the respective children. The idea is to
allocate larger portions to larger subtrees, which becomes important in especially large networks, because it maximizes the address space utilization.
Note that this approach needs information aggregated along multiple hops, which
results in a longer set-up phase.

\subsection{Messages}\label{subsec:messages}

MHCL uses four message types, which we implemented by modifying the DIO and DAO
message types defined in the RPL protocol: $DIO_{MHCL}$, $DIOACK_{MHCL}$,
$DAO_{MHCL}$, and $DAOACK_{MHCL}$.

Messages of type $DIO_{MHCL}$ are sent along downward routes, from parent to
child. This message is used for address allocation and contains the address and
corresponding address partition assigned to a child node by its parent. The
message is derived from the DIO message type used in RPL, with \textit{flag} field set to $1$ (in RPL the flag
is set to $0$) and child node's address (which is the first address in the
allocated IPv6 range) and the allocated address partition size sent in the
\textit{options} field (see Figure~\ref{fig:diomhcl}). Note that the size of the first address
and the size of the allocated address partition can have a length pre-defined by
the root, according to the overall address space (we used a value of 16 bits,
because the compressed host address has 16 bits).
This information is sufficient for the child node to decode the message and
execute the address allocation procedure for its children. Messages of type
$DIOACK_{MHCL}$ are identified by the \textit{flag} field set to $2$ and are
used to acknowledge the reception of an allocated address range through a
$DIO_{MHCL}$ message.
If no acknowledgment is received (after a timer of type $DIO_{min}$ goes off,
see Section~\ref{subsec:timers}), then the parent tries to retransmit the
message. % up to three times.

\begin{figure}
\centering
\includegraphics[width=.7\columnwidth]{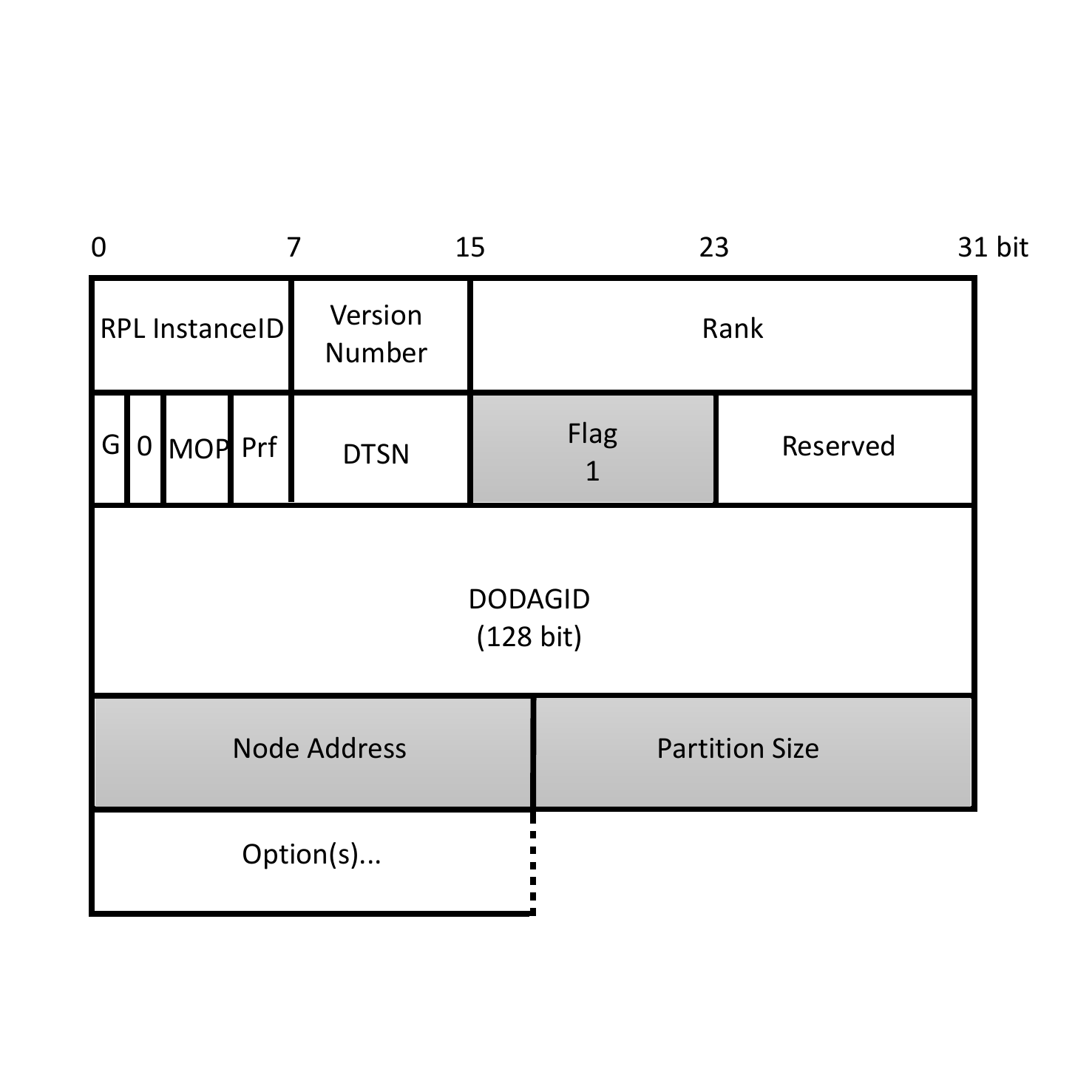}
						\caption{$DIO_{MHCL}$ message.}\label{fig:diomhcl}
\end{figure}

\begin{figure}
\centering
\includegraphics[width=.7\columnwidth]{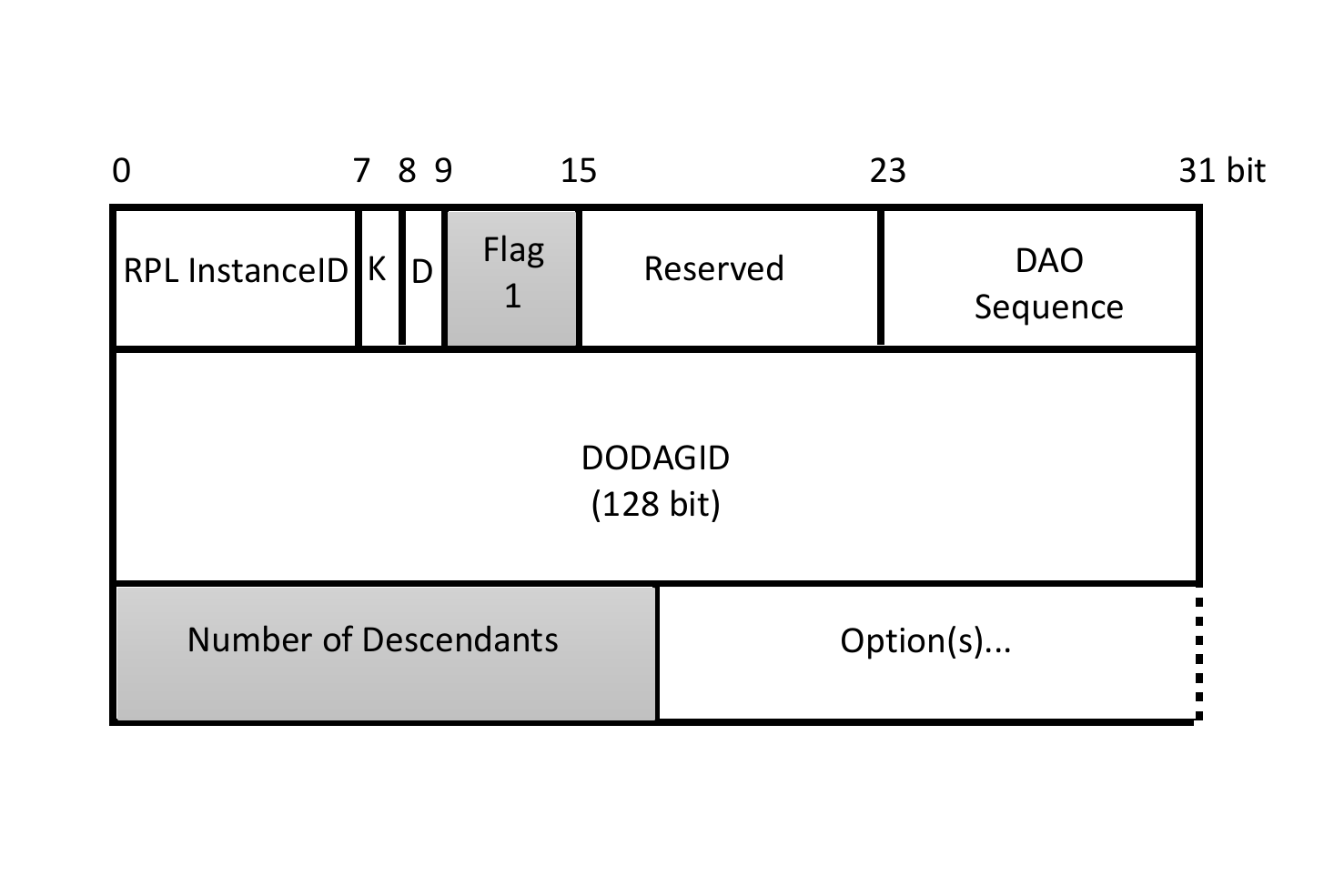}
	\caption{$DAO_{MHCL}$ message.}\label{fig:daomhcl}
	\end{figure}

Messages of type $DAO_{MHCL}$, as in RPL, are used in the upward routes, from
child to parent.
This message has two functions: in the greedy approach it informs
the preferred parent of a node that the latter is the former's child; in the
aggregate approach, it carries the number of a node's descendants, used in the
aggregation phase. We modified the $DAO_{RPL}$ message
by setting the \textit{flag} field to $1$ and sending the number of respective
descendants in the \textit{options} field (Figure~\ref{fig:diomhcl}).
$DAOACK_{MHCL}$ messages are used to acknowledge the reception of a $DAO_{MHCL}$
message. 
%in the greedy approach MHCL-G (it is not necessary in the aggregate
%approach because temporization is used to ajust to changes in network's
%topology, which acts as a retransmission on its own TODO: CHECK).

\subsection{Communication routines and timers}\label{subsec:timers}

%TODO: Algorithms, timers (top-down, bottom-up, parameters P(parent),
%C(children), D(descendents))

The greedy approach MHCL-G is comprised of the following one-hop communication
routines:

\begin{enumerate}

  \item \textbf{Inform preferred parent:} If a node is not the root, it must
  inform its preferred parent that it is one of its children and requires an address
  (or range of addresses), sending a $DAO_{MHCL}$ message. A node must wait
  until the list of potential parents is stabilized and only then inform the
  chosen parent. MHCL uses Trickle-inspired timers to make such decisions (see
  Algorithm~\ref{alg:timerParent}). In
  Algorithm~\ref{alg:timerParent}, two constants are used: $DIO_{min}$ is a
  timer used by RPL for a node to set the minimum waiting time before sending a
  DIO message. $spChild$ is a \textit{stabilization parameter} used to decide
  when the preferred parent is stable (refer to
  Table~\ref{tab:stabilizationParams} for parameter values used in the simulations). Once the parent choice becomes
  stable, the local variable $parentDefined$ is set to TRUE and a
  $DAO_{MHCL}$ is send to the parent to inform about the decision (lines 12--18,
  Algorithm~\ref{alg:timerParent}).
  
	\item \textbf{Count children:}  Each non-leaf node must receive preferred
	parent messages ($DAO_{MHCL}$) from its children, save the list of
  children\footnote{Note that in standard implementations of acyclic structures in protocols such as RPL and CTP, a node does not save a list of
  children, but only a list of potential parents}, update the counter of
  children, and acknowledge the packet reception of each child (by sending a
  $DAOACK_{MHCL}$). The counting process goes on until the information becomes
  stable, which is controlled by parameter
  $spParent$. Once the number of
  children becomes stable, the local variable $childrenDefined$ is set to TRUE
  (line 13, Algorithm~\ref{alg:timerChildren}). 
  
	\item \textbf{Receive address range:} If a node is not the root, it
  must wait until it receives the address range allocated to it by its parent
  (in a $DIO_{MHCL}$ message) and acknowledge it by sending a $DIOACK_{MHCL}$
  message.
 
 \item \textbf{Send address range to children:} The address space partition and
  allocation is started by the root, once it determines the number of (direct)
  children ($childrenDefined$ is TRUE). The available address space is
  partitioned equally between the known children (keeping a reserve for future or delayed connections, as explained in Section~\ref{subsec:addr-space}). Once the
  address space is partitioned, the corresponding address range is
  sent in a $DIO_{MHCL}$ message to each child (see
  Algorithm~\ref{alg:addrDistr}). For non-root nodes, once the number of
  children is known and the address space has been received from the parent
  (in a $DIO_{MHCL}$ message), the address partitioning is done in the same way.
 
 \item \textbf{Delayed connections:} if a $DAO_{MHCL}$ message from a new child
  node is received after the address space had already been partitioned and
  assigned, then the address allocation procedure is repeated using the reserved
  address space.
\end{enumerate}

\begin{algorithm}[h]
\caption{MHCL: Preferred parent timer}\label{alg:timerParent}
%\begin{footnotesize}
\begin{algorithmic}[1]
\State parentDefined = FALSE;
\State maxTime = $spChild * DIO_{min}$ ;
\State timer = rand($1/2 * DIO_{min}$, $DIO_{min}$]; \Comment{reset timer}
\While{NOT parentDefined}
	\If {NOT-ROOT \textbf{and} TIMER-OFF} %\State $parent = parentList[0]$;
		\If {PARENT-CHANGED}
			\State reset timer;
		\Else	
		\If {$timer < maxTime$} % param == 2
			\State timer *= 2; \Comment{double timer}
		\Else
			\State parentDefined = TRUE; \Comment{MHCL-G-A}
			\If {MHCL-G} \Comment{greedy approach}
				\State send $DAO_{MHCL}$ to parent; 
				\If{NO $DAOACK_{MHCL}$} % after $DIO_{min}$
					\State send $DAO_{MHCL}$ to parent; \Comment{retry}
				\EndIf							
			\EndIf
		\EndIf
		\EndIf
	\EndIf	
\EndWhile
\end{algorithmic}
%\end{footnotesize}
\end{algorithm}

\begin{algorithm}
\caption{MHCL-G: Children counter timer}\label{alg:timerChildren}
%\begin{footnotesize}
\begin{algorithmic}[1]
  \State childrenDefined = FALSE;	
  \State maxTime = $spParent *DIO_{min}$;
  \State timer = rand($1/2 * DIO_{min}$, $DIO_{min}$]; \Comment{reset timer}
  \State count = 0; \Comment{counts number of children by receiving
  $DAO_{MHCL}$ messages}
  \While{NOT childrenDefined}
  \If{TIMER-OFF}
		\If{COUNT-CHANGED}%\textbf{or} $(count < 1)$ \textbf{and} IS-ROOT
%			\State \Comment{if root found no children, reset timer}
			\State reset timer;
		\Else
			\If {$timer < maxTime$} %param == 4
				\State timer *= 2;
			\Else
				\State childrenDefined = TRUE;
			\EndIf
		\EndIf
  \EndIf
  \EndWhile
\end{algorithmic}
%\end{footnotesize}
\end{algorithm}

\begin{algorithm}
\caption{MHCL: IPv6 address distribution}\label{alg:addrDistr}
%\begin{footnotesize}
\begin{algorithmic}[1]
  \State STABLE = FALSE;
  \If{MHCL-G} \Comment{Greedy MHCL-G}
  	\State STABLE = childrenDefined;
  \Else \Comment{Aggregate MHCL-A}
  	\State STABLE = descendantsDefined \textbf{or} NOT-ROOT;
  \EndIf
    		
  \If {STABLE \textbf{and} (IS-ROOT \textbf{or} RECEIVED-$DIO_{MHCL}$)} 
  	\State partition available address space;
	\For{each child $c_i$} 
		\State send $DIO_{MHCL}$ to $c_i$; \Comment{send IPv6 ``range''}
		\If{NO $DIOACK_{MHCL}$}
			\State send $DIO_{MHCL}$ to $c_i$; \Comment{retransmit}
		\EndIf
  	\EndFor		
  \EndIf
\end{algorithmic}
%\end{footnotesize}
\end{algorithm}

\begin{comment}
\begin{minipage}[t]{0.48\textwidth}
\caption{\scriptsize\textbf{MHCL: Preferred parent timer}}\label{alg:timerParent}
\begin{algorithmic}[1]
\scriptsize
\State parentDefined = FALSE;
\State maxTime = $spChild * DIO_{min}$ ;
\State timer = rand($1/2 * DIO_{min}$, $DIO_{min}$]; \Comment{reset timer}
\While{NOT parentDefined}
	\If {NOT-ROOT \textbf{and} TIMER-OFF} %\State $parent = parentList[0]$;
		\If {PARENT-CHANGED}
			\State reset timer;
		\Else	
		\If {$timer < maxTime$} % param == 2
			\State timer *= 2; \Comment{double timer}
		\Else
			\State parentDefined = TRUE; \Comment{MHCL-G \& -A}
			\If {MHCL-G} \Comment{greedy approach}
				\State send $DAO_{MHCL}$ to parent; 
				\If{NO $DAOACK_{MHCL}$} % after $DIO_{min}$
					\State send $DAO_{MHCL}$ to parent; \Comment{retry}
				\EndIf							
			\EndIf
		\EndIf
		\EndIf
	\EndIf	
\EndWhile
\end{algorithmic}
\end{minipage}
\begin{minipage}[t]{0.48\textwidth}
\caption{\scriptsize\textbf{MHCL-G: Children counter timer}}\label{alg:timerChildren}
\begin{algorithmic}[1]
\scriptsize
  \State childrenDefined = FALSE;	
  \State maxTime = $spParent *DIO_{min}$;
  \State timer = rand($1/2 * DIO_{min}$, $DIO_{min}$]; \Comment{reset timer}
  \State count = 0; \Comment{counts number of children by receiving
  $DAO_{MHCL}$ messages}
  \While{NOT childrenDefined}
  \If{TIMER-OFF}
		\If{COUNT-CHANGED}%\textbf{or} $(count < 1)$ \textbf{and} IS-ROOT
%			\State \Comment{if root found no children, reset timer}
			\State reset timer;
		\Else
			\If {$timer < maxTime$} %param == 4
				\State timer *= 2;
			\Else
				\State childrenDefined = TRUE;
			\EndIf
		\EndIf
  \EndIf
  \EndWhile
\end{algorithmic}
\end{minipage}
\end{comment}

The aggregate approach MHCL-A firstly executes an aggregation procedure to
compute the total number of descendants of each node. If a node is not the root,
and it has defined who the preferred parent is (parentDefined is TRUE) it starts
by sending a $DAO_{MHCL}$ message with $count=0$ (see
Algorithm~\ref{alg:bottomUpNotRoot}). Then it waits for $DAO_{MHCL}$ messages
from its children, updates the number of descendants of each child, and propagates the updated counter to the parent until its total number of
descendants is stable. If a node is the root, then it just updates the number of
descendants of each child by receiving  $DAO_{MHCL}$ messages until its total
number of descendants is stable (see Algorithm \ref{alg:bottomUpRoot}).
Parameters $spLeaf$ and $spRoot$ are used to define stabilization criteria in
non-root nodes and the root node, respectively.
Once the aggregation phase is completed, the root's local variable
$descendantsDefined$ is set to TRUE, and the address allocation process is
started by the root and propagated toward the leaves, as in the greedy approach (see
Algorithm~\ref{alg:addrDistr}).

\begin{algorithm}
\caption{MHCL-A: Aggregation
timer (non-root nodes)}\label{alg:bottomUpNotRoot} 
%\begin{footnotesize}
\begin{algorithmic}[1]
  \State maxTimeLeaf = $spLeaf *DIO_{min}$; 
  \State timer = rand($1/2 * DIO_{min}$, $DIO_{min}$]; \Comment{reset timer}
  \State count = 0; \Comment{counts total number of descendants by receiving
  $DAO_{MHCL}$ messages}
\While{NO-$DIO_{MHCL}$-FROM-PARENT} 
\State \Comment{while has not received IPv6 address range}
\If{NOT-ROOT \textbf{and} TIMER-OFF}
		\If{parentDefined \textbf{and} $(count<1)$}
			\State send $DAO_{MHCL}$ to parent; 
			\State \Comment{trigger aggregation}
		\EndIf
		\If {COUNT-CHANGED}
			\State send $DAO_{MHCL}$ to parent; 
			\State reset timer;		
		\Else	\If{$timer < maxTimeLeaf$} %param=4
					\State timer *= 2;
				\EndIf
		\EndIf
\EndIf
\EndWhile
\end{algorithmic}
%\end{footnotesize}
\end{algorithm}

\begin{algorithm}
\caption{MHCL-A: Aggregation timer (Root)}\label{alg:bottomUpRoot}
%\begin{footnotesize}
\begin{algorithmic}[1]
  \State descendantsDefined = FALSE;	
  \State maxTimeRoot = $spRoot *DIO_{min}$; 
  \State timer = rand($1/2 * DIO_{min}$, $DIO_{min}$]; \Comment{reset timer}
  \State count = 0; \Comment{counts total number of descendants by receiving
  $DAO_{MHCL}$ messages}
\While{NOT descendantsDefined}
\If{IS-ROOT \textbf{and} TIMER-OFF}
		\If{COUNT-CHANGED} %\textbf{or} $(count<1)$
			\State reset timer;
		\Else
			\If{$timer < maxTimeRoot$}
				\State $timer *= 2$;
			\Else
				\State descendantsDefined = TRUE;
			\EndIf
		\EndIf
\EndIf
\EndWhile
\end{algorithmic}
%\end{footnotesize}
\end{algorithm}

\subsection{Routing tables and forwarding}\label{subsec:routing}

After the address allocation is complete, each (non-leaf) node stores a routing
table for downward traffic, with an entry for each child. Each table entry
contains the final address of the address range allocated to the corresponding
child, and all table entries are sorted in increasing order of the final address
of each range. In this way, message forwarding can be performed in linear
time using one comparison operation per table entry (see Algorithm~\ref{alg:forward}).
Of course, binary search could also be used, but considering that routing
table size is limited by the number of direct children of a node, we opted for
simplicity.

\begin{comment}
\begin{minipage}[t]{0.48\textwidth}
\caption{MHCL-A: Aggregation timer (Root)}\label{alg:bottomUpRoot}
\begin{algorithmic}[1]
\scriptsize
  \State descendantsDefined = FALSE;	
  \State maxTimeRoot = $spRoot *DIO_{min}$; 
  \State timer = rand($1/2 * DIO_{min}$, $DIO_{min}$]; \Comment{reset timer}
  \State count = 0; \Comment{counts total number of descendants by receiving
  $DAO_{MHCL}$ messages}
\While{NOT descendantsDefined}
\If{IS-ROOT \textbf{and} TIMER-OFF}
		\If{COUNT-CHANGED} %\textbf{or} $(count<1)$
			\State reset timer;
		\Else
			\If{$timer < maxTimeRoot$}
				\State $timer *= 2$;
			\Else
				\State descendantsDefined = TRUE;
			\EndIf
		\EndIf
\EndIf
\EndWhile
\end{algorithmic}
\end{minipage}
\begin{minipage}[t]{0.48\textwidth}
\caption{MHCL: Message forwarding}\label{alg:forward}
\begin{algorithmic}[1]
\scriptsize
  \State $i = 0$;
	\While {$IPv6-dest > routingTable[i].finalRangeAddr$}
		\State $i++$;
	\EndWhile
	\State forward message to $routingTable[i].childAddr$;
\end{algorithmic}
\end{minipage}
\end{comment}

\begin{algorithm}
\caption{MHCL: Message forwarding}\label{alg:forward}
\begin{algorithmic}[1]
  \State $i = 0$;
	\While {$IPv6-dest > routingTable[i].finalRangeAddr$}
		\State $i++$;
	\EndWhile
	\State forward message to $routingTable[i].childAddr$;
\end{algorithmic}
\end{algorithm}

\section{Complexity Analysis}\label{sec:analysis}

We now turn our attention to the time ($Time(MHCL)$) and message
($Message(MHCL)$) complexity of MHCL, assuming a synchronous communication model
with point-to-point message passing, i.e., that all nodes start executing the algorithm simultaneously and that time is divided into synchronous rounds, such that, whena message is sent from node $v$ to its neighbor $u$ at time-slot $t$, it must arrive at $u$ before time-slot $t+1$.

Note that MHCL requires that an underlying acyclic topology (say tree $T$) has
been constructed by the network before the address allocation starts, i.e., every node knows who its preferred parent is ($parentDefined==TRUE$).
The greedy approach MHCL-G has two  phases: (1) a one-hop communication routine, run
in parallel at all nodes, in which each node informs its preferred parent about
its decision, the parent sends an acknowledgment, and, by the end of this
routine, every node knows how many children it has; (2)
a broadcast from the root to all nodes in the tree $T$ of address allocation
information, each message being acknowledged by the recipient of the address
range. The time complexity of routine (1) is $2$, and the message complexity is
$2(n-1)$. The time complexity of routine (2) is $depth(T)$ and message complexity is $2(n-1)$.

The aggregate approach MHCL-A also has two phases: (1) number of descendants
aggregation, using a convergecast routine, in which leaf nodes start sending
messages to their parents and parents aggregate the received information and
forward it upwards until the root is reached; (2) the root broadcasts address
allocation information downward along the tree $T$, each message being
acknowledged. The time complexity of
each phase is $depth(T)$ and the message complexity is $2(n-1)$.

The overall complexity of MHCL in the synchronous point-to-point message passing
model is summarized in the following theorem.

\begin{theorem} For any network of size $n$ with a spanning tree $T$ rooted at
node $root$, $Message(MHCL(T, root))= O(n)$ and $Time(MHCL(T, root)) =
O(depth(T))$. This message and time complexity is asymptotically optimal.
\end{theorem}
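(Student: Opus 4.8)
The plan is to prove the upper and lower bounds separately and then conclude optimality by matching them. For the upper bounds I would simply assemble the per-phase complexities already derived above. For MHCL-G, phase~(1) (the one-hop ``inform parent / acknowledge / count children'' routine run in parallel) costs time $2$ and $2(n-1)$ messages, and phase~(2) (the root-to-leaves broadcast of address ranges, each acknowledged) costs time $depth(T)$ and $2(n-1)$ messages; summing gives $Time = depth(T)+2 = O(depth(T))$ and $Message = 4(n-1) = O(n)$. For MHCL-A, the aggregation convergecast and the top-down address broadcast each cost time $depth(T)$ and $2(n-1)$ messages, giving $Time = 2\,depth(T) = O(depth(T))$ and $Message = 4(n-1) = O(n)$. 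Taking the larger of the two variants still yields $O(n)$ messages and $O(depth(T))$ time, which establishes the first half of the statement.

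Next I would establish the two lower bounds that make these tight. For messages: any protocol that assigns to every node an address (or address range) chosen by an external entity must cause each of the $n-1$ non-root nodes to receive at least one message, since a node cannot learn an address it did not participate in choosing without receiving it; hence $Message = \Omega(n)$. For time: in the assumed synchronous point-to-point model, information propagates at most one hop per round, and the range allocated to a node is determined by the successive partitioning decisions of all of its ancestors (each ancestor subdivides the range it itself received from its own parent, starting at the root). Therefore a node at depth $d$ cannot commit to its correct address before round $d$, and choosing a node at depth $depth(T)$ yields $Time = \Omega(depth(T))$. Together with the upper bounds this gives $Message(MHCL(T,root)) = \Theta(n)$ and $Time(MHCL(T,root)) = \Theta(depth(T))$.

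The step I expect to be the main obstacle is making the time lower bound rigorous, i.e., ruling out that a deep node ``shortcuts'' the downward chain of messages by computing its address locally. I would handle this with a standard indistinguishability argument: exhibit two tree instances that agree on everything a node $v$ at depth $d$ can observe up to round $d-1$ but differ in the number of children of some ancestor of $v$ (or, for MHCL-A, in the size of some subtree hanging off an ancestor), so that the range legitimately allocated to $v$ must differ between the two instances; since $v$ has received no distinguishing information before round $d$, any correct protocol forces it to wait, yielding the $\Omega(depth(T))$ bound. Everything else is routine accounting of the phase costs already spelled out in the text, so once this causal-chain argument is in place the theorem follows.
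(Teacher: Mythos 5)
Your proposal is correct and follows the same overall structure as the paper's argument: the upper bounds come from the identical per-phase accounting (one-hop parent/children routine plus an acknowledged root-to-leaves broadcast for MHCL-G; an aggregation convergecast plus the acknowledged broadcast for MHCL-A), and the lower bounds come from the observations that every non-root node must receive at least one message and that information must traverse $depth(T)$ hops. The one place you go beyond the paper is the time lower bound: the paper simply asserts that the root's message must reach a node $depth(T)$ hops away (and, symmetrically, that leaf messages must reach the root), which is really a statement about the cost of the broadcast and convergecast that MHCL itself performs rather than a lower bound on an arbitrary protocol solving the task. Your indistinguishability argument --- two instances that agree on everything a depth-$d$ node can observe before round $d$ but differ in an ancestor's child count (or subtree size), hence in the range that node must end up with --- is exactly what is needed to rule out a deep node computing its address locally and thereby to justify the word ``optimal'' in the strong sense. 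That extra step is sound, is absent from the paper, and strictly strengthens the claim; nothing else in your proposal diverges from the paper's reasoning.
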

\begin{proof}
MHCL is comprised of a tree broadcast and a tree convergecast (in the aggregate
approach). In the broadcast operation, a message (with address allocation
information) must be sent to every node by the respective parent, which needs
$\Omega(n)$ messages.
Moreover the message sent by the root must reach every node at
distance $depth(T)$ hops away, which needs $\Omega(depth(T))$
time-slots. Similarly, in the convergecast operation, every node must send a
message to its parent after having received a message from its children, which
needs $\Omega(n)$ messages. Also, a message sent by every leaf node must reach
the root, at distance $\leq depth(T)$, which needs $\Omega(depth(T))$
time-slots.
\end{proof}

Note that, in reality, the assumptions of synchrony and point-to-point message
delivery do not hold in a 6LoWPAN. The moment in which each node joins the tree,
i.e., sets $parentDefined=TRUE$ varies from node to node, such that nodes closer
to the root tend to start executing the address allocation protocol earlier than
nodes farther away from the root. Moreover, collisions and node and
link failures can cause delays and prevent messages from being delivered. We
analyze the performance of MHCL in an asynchronous model with collisions and
transient node and link failures through simulations in Section~\ref{sec:results}.

\section{Experimental Results}
\label{sec:results}

In this section we evaluate the performance of MHCL through simulations.
In Section \ref{subsec:settings} we describe the simulations setup and
configuration parameters used.
In Section \ref{subsec:setup} we evaluate the setup phase of the network
in terms of time, number of control messages and address allocation success
rate. In Section \ref{subsec:appLayer} we evaluate message delivery
success rate at the application layer for top-down data flows.

\subsection{Simulation Settings}\label{subsec:settings}

MHCL was implemented as a subroutine of the RPL protocol in Contiki
OS \cite{Dunkels:2004}, replacing the standard static IPv6 address allocation
mechanism, in which the host address of a node is derived from its MAC address.
Our experimental results compare the performance of RPL using MHCL-G (greedy
approach) and MHCL-A (aggregate approach) against the static address allocation
(referred to as RPL in the plots). To perform the experiments, we used Cooja~\cite{Eriksson:2009},
a 6LoWPAN simulator provided by Contiki, with nodes
of type \textit{SkyMote}, with 10KB of RAM and 50 meter transmission range.

We simulated two types of topology: regular (with $n$ nodes distributed in a
regular grid with 35m pairwise distance) and uniform (with nodes distributed
uniformly at random over an area of side $(\sqrt{n}-2) * 35$m), with $n \in \{9, 25,
49, 81, 121, 169\}$. We also simulated two types of (transient) failures at the
packet level:
node (TX) and link (RX) failures.
In a failure of type TX, in $1-TX\%$ of transmissions, no device receives the
packet. In failures of type RX, in $1-RX\%$ of transmissions, only the
destination node does not receive the packet. We simulated TX and RX failures
separately, using failure rates of $0\%$, $5\%$, and $10\%$.

At the application layer, we used as reference an application code provided
within Cooja (rpl-UDP), which uses UDP in the transport layer, i.e. does not
implement packet acknowledgment or retransmission, and RPL in the network layer. We made some
changes in the application, so that each node sends a message to the root and
the root answers each of these messages. Thus, the application sends $n-1$ bottom-up messages and $n-1$
top-down messages. Nodes begin sending application messages after 180 seconds of
simulation. The simulation of each algorithm terminates when the last
application message has been answered, or when the time for this event to be
completed ends, in case of failures.

Table~\ref {tab:stabilizationParams} shows the parameters of the algorithms
presented in Section~\ref{sec:mhcl}, which are used to estimate the
stabilization time of the network as follows: each timer used in MHCL protocol
starts with a minimum value $I = 2^{DIO_{min}}$ ms (the same
parameter used in RPL) and a maximum value $sp * 2^{DIO_{min}}$ ms, where $sp$
is the stabilization parameter of each routine. 
(see the algorithms presented in Section ~\ref{sec:mhcl}).

\subsection{Setup Phase}\label{subsec:setup}

To evaluate MHCL algorithm, the setup time was defined as the time required for all nodes
to be addressed, excluding the not addressed nodes due to collisions and link or
node failures.
In the standard RPL protocol, we measured the time needed for the  root to save $n-1$ routes in the downward routing
table, i.e., the time needed for the root  to know the path to every destination in the
network. Figures \ref{fig:setup}, \ref{fig:setup_f}  and \ref{fig:setup_f_u} show the setup
time of RPL, MHCL-G and MHCL-A in scenarios without  failures, with failures in the regular 
topology and with failures in the uniform topology,  respectively. Note that we are dealing with
asynchronous algorithms and the start of each  node is randomized, which can affect the setup time.
Nevertheless, we can see that the setup  time of all protocols grows linearly with the height of the
DAG structure, matching the theoretical time complexity shown in
Section~\ref{sec:analysis}. If we consider a $95\%$ confidence interval, we cannot claim that one algorithm is faster than another in the regular
topology, because of overlapping confidence intervals (except maybe for height 24 in Figure~\ref{fig:setup},
where we can see that MHCL-G is faster than MHCL-A and RPL, with no overlapping).

We performed the t-test~\cite{Jain1991} for the setup time in the regular distribution without
failures, which revealed that, with $95\%$ of confidence, the setup times of the three algorithms
are statistically the same, as all three intervals contain zero. However, for the uniform topology,
there is a significant difference between MHCL and RPL. With $95\%$ confidence, we can say that in 
a network with and without failures, RPL presents a faster setup, since there is no overlap between confidence intervals.

\begin{figure}[t]
\centering
\includegraphics[width=0.6\columnwidth]{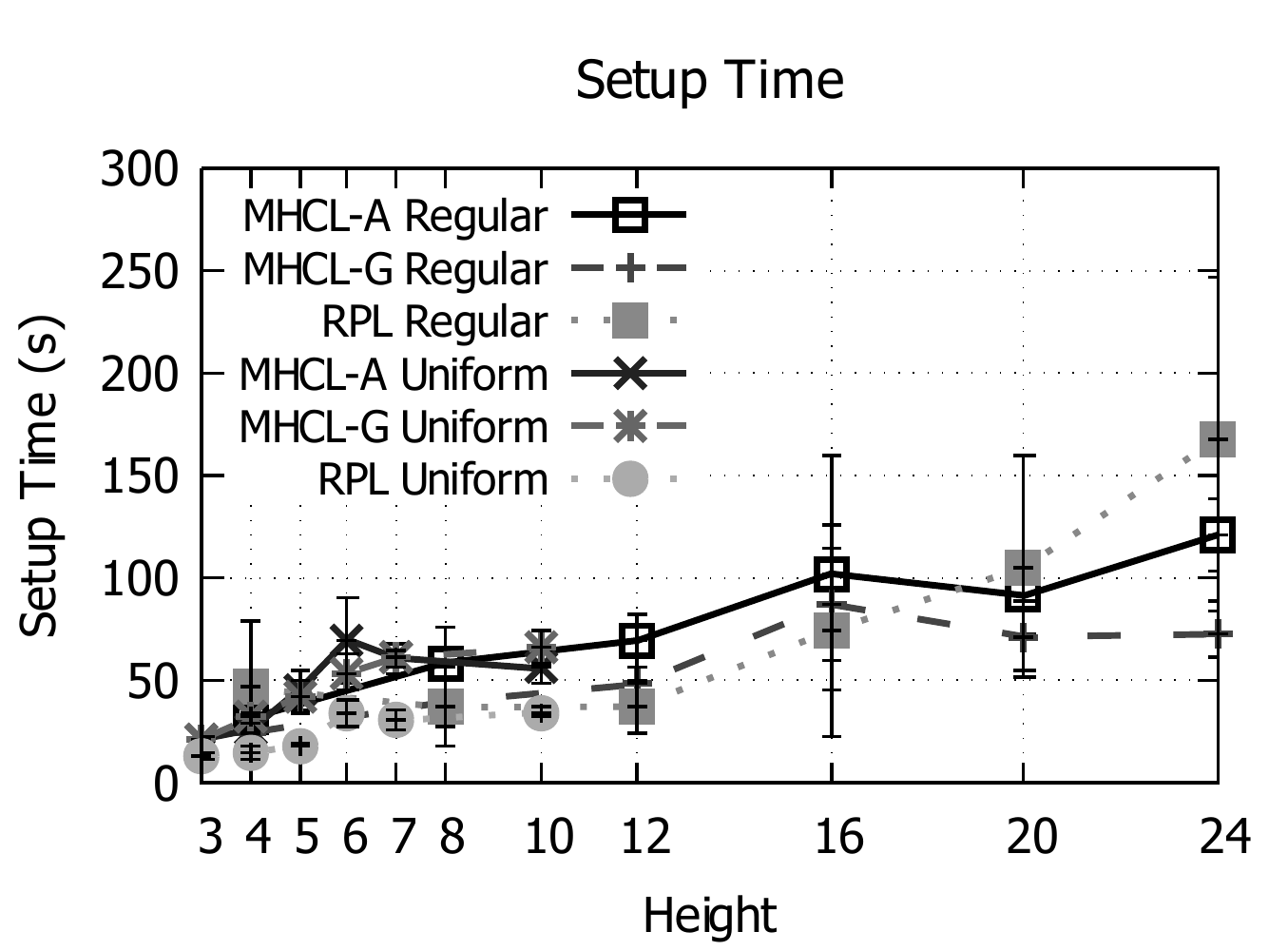}
\caption{Network's setup time.}\label{fig:setup}
\end{figure}

\begin{figure}[h]
\begin{center}
  \subfigure[TX Failure]
  {\includegraphics[width=.45\columnwidth]{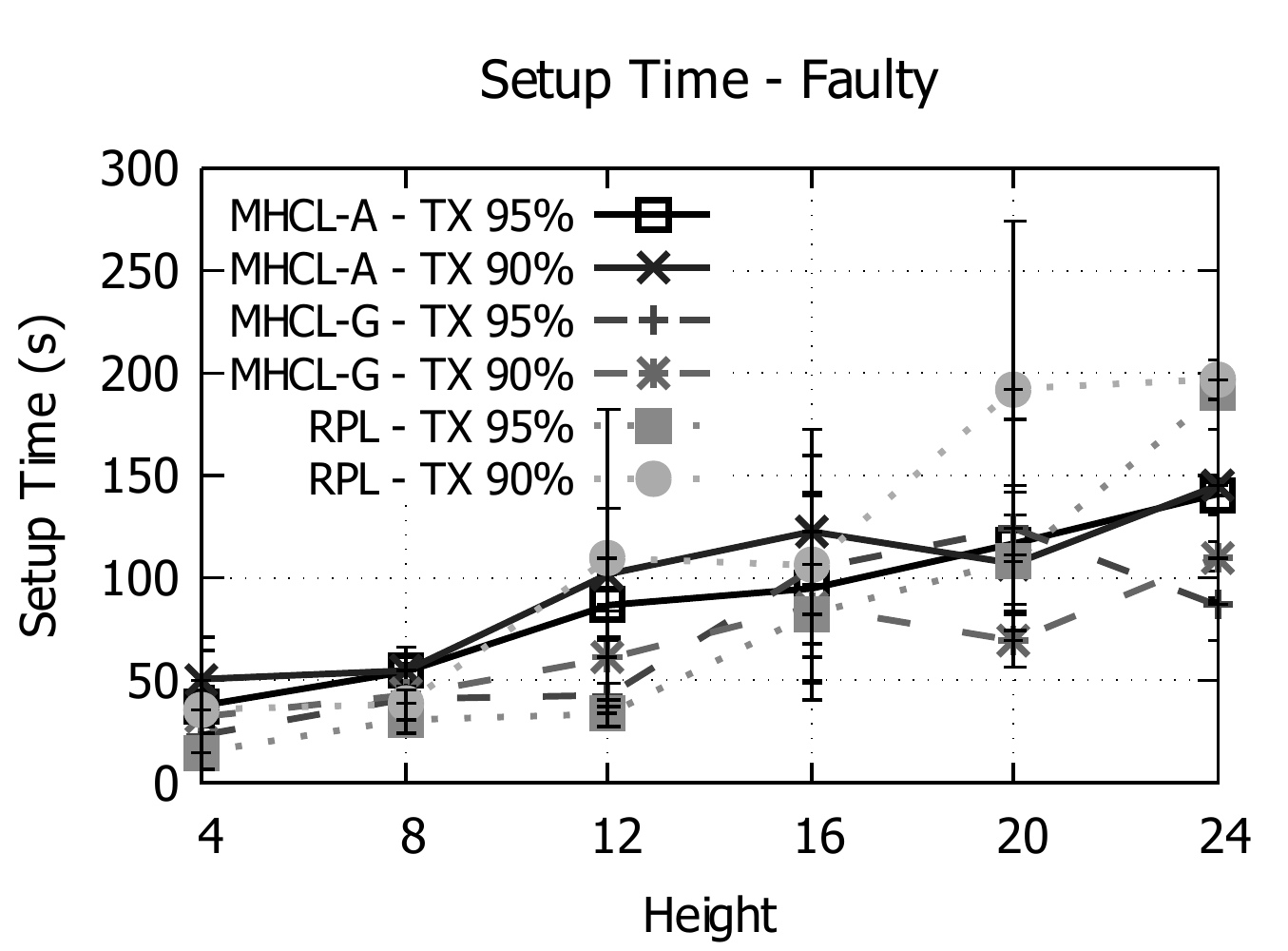}}
%            \hspace*{\fill}
  \subfigure[RX Failure]
            {\includegraphics[width=.45\columnwidth]{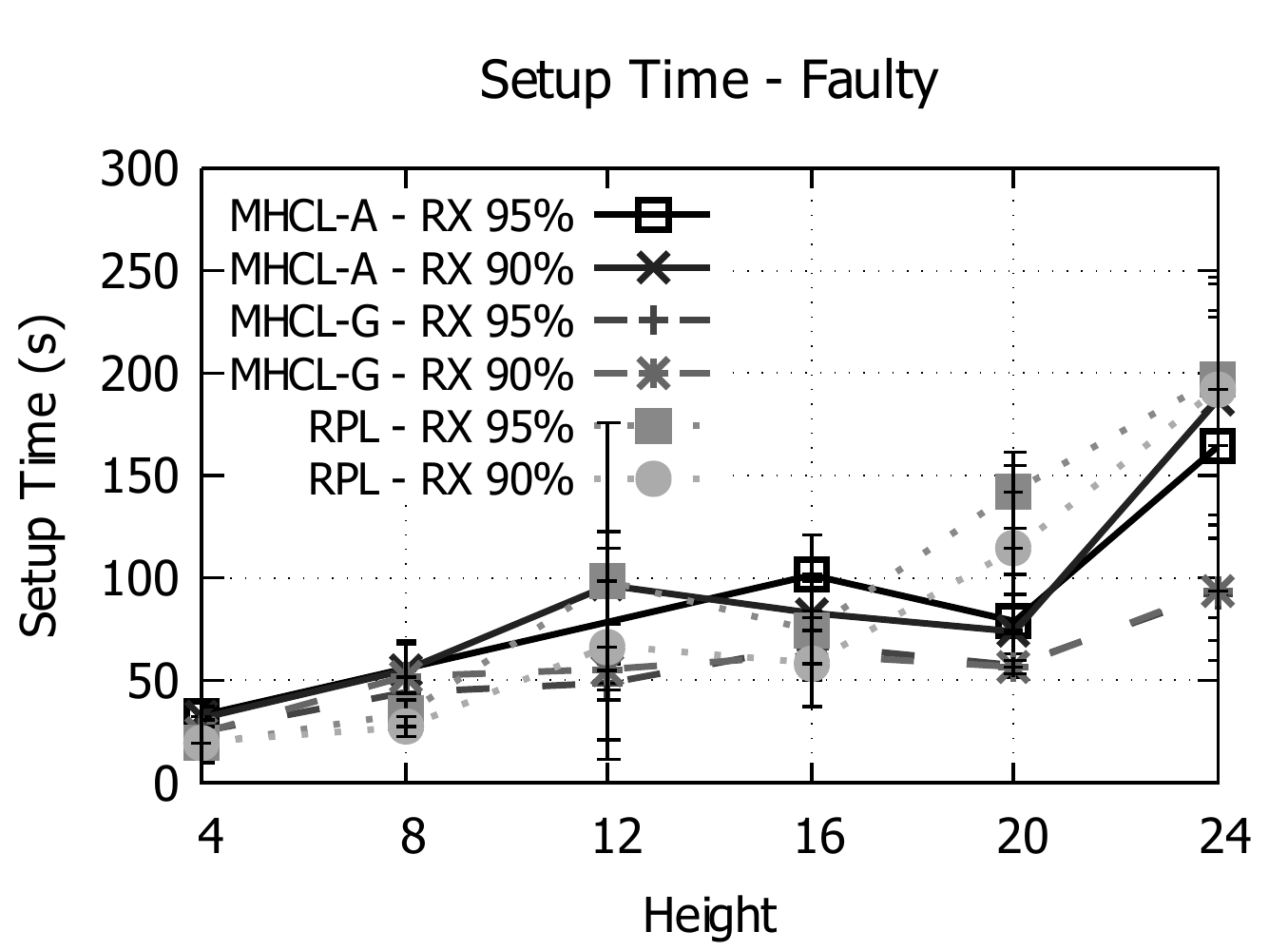}}
            \caption{Network's setup time - Faulty - Regular Topology.}\label{fig:setup_f}
\end{center}
\end{figure}

\begin{figure}[h]
\begin{center}
  \subfigure[TX Failure]
  {\includegraphics[width=.45\columnwidth]{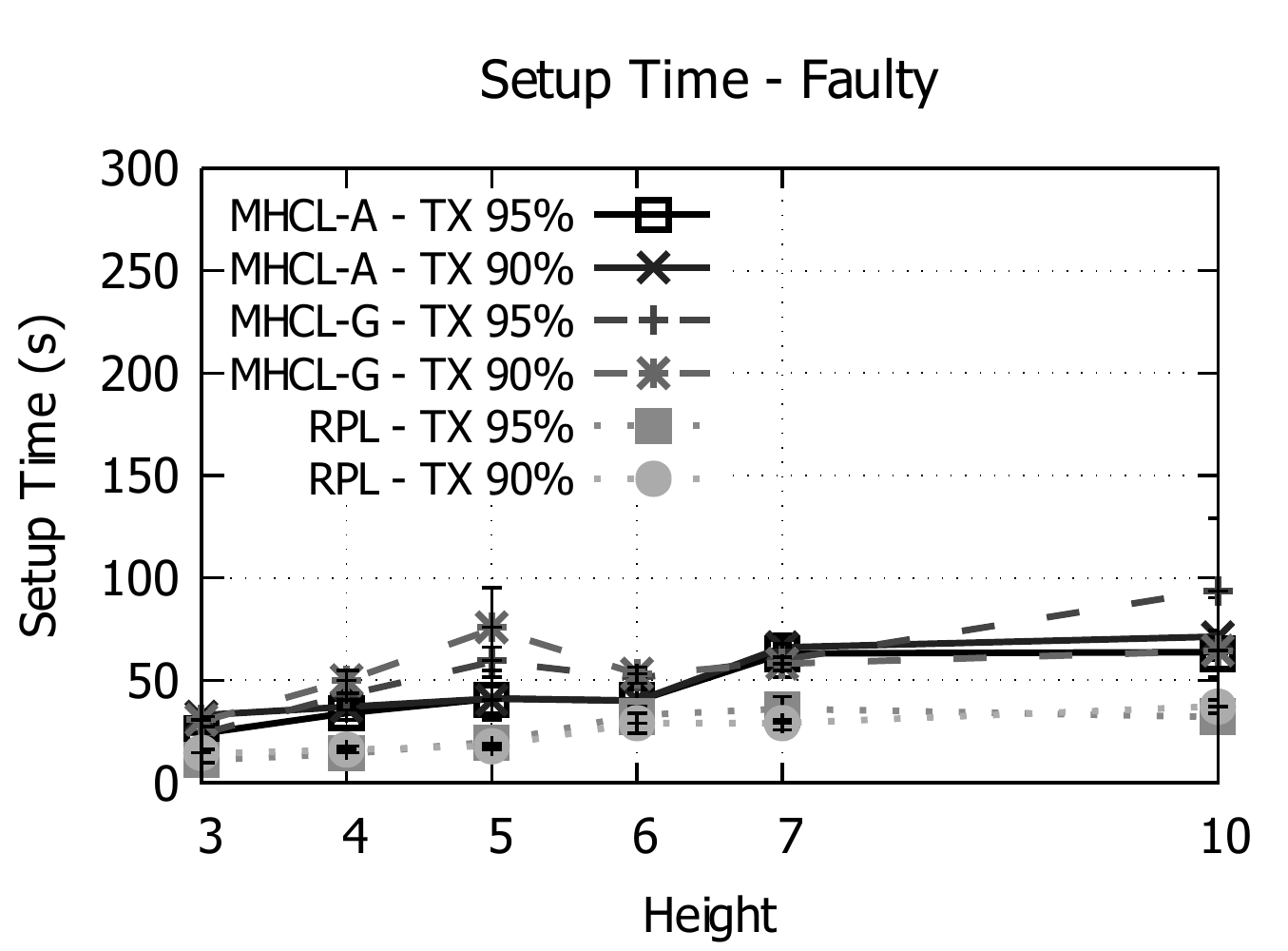}}
%            \hspace*{\fill}
  \subfigure[RX Failure]
            {\includegraphics[width=.45\columnwidth]{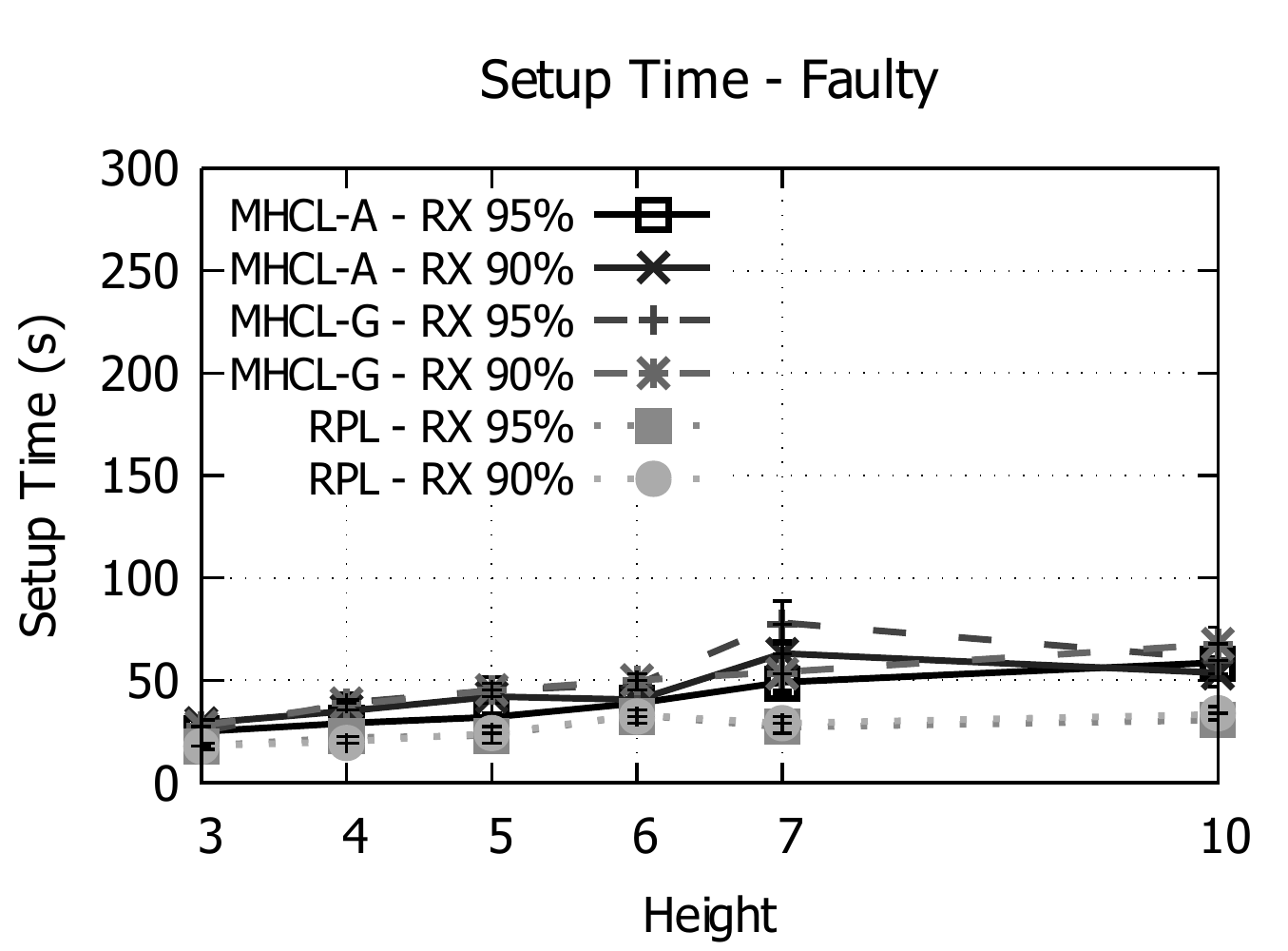}}
            \caption{Network's setup time - Faulty - Uniform Topology.}\label{fig:setup_f_u}
\end{center}
\end{figure}

To count the number of messages sent by each protocol, an interval of 180 seconds was defined. 
During this interval, the numbers of DAO and DIO messages were counted separately  (with the respective
acknowledgments). Figures \ref{fig:dio}, \ref{fig:dio_f} and \ref{fig:dio_f_u} show the number of DIOs and Figures
\ref{fig:dao}, \ref{fig:dao_f} and \ref{fig:dao_f_u} show the number of DAOs sent by RPL, MHCL-G and MHCL-A in scenarios
without and with failures in the regular and uniform topologies. It can be seen that,
despite collisions and transient failures of nodes and links, the number of both types of messages grows
linearly with the number of nodes, matching the theoretical message complexity
shown in Section~\ref{sec:analysis}.

The number of DIOs starts to differ statistically between algorithms,
with a confidence level of $95 \%$, after 121 nodes, for regular topology.
We expected no difference between the algorithms, since the mechanism for discovery and maintenance of DAG is not
different from RPL to MHCL. MHCL uses DIO messages for the address allocation mechanism, but the number of additional
messages is small compared to the total amount of DIO messages transmitted in the network. However, RPL starts sending
more DIO messages than MHCL as the network grows.

Since the route discovery mechanism, which uses DAO messages, is different in MHCL and RPL, the number of sent
messages of this type has a significant difference. With $95 \%$ confidence we
can say that MHCL sends less DAO messages than RPL in all simulated
scenarios with and without failures. In MHCL, after the
topology information has been collected, no more DAO messages are sent. In RPL,
route advertisement does not halt, which makes DAO messages
to be continuously transmitted.

\begin{figure}[t]
\centering
\includegraphics[width=0.6\columnwidth]{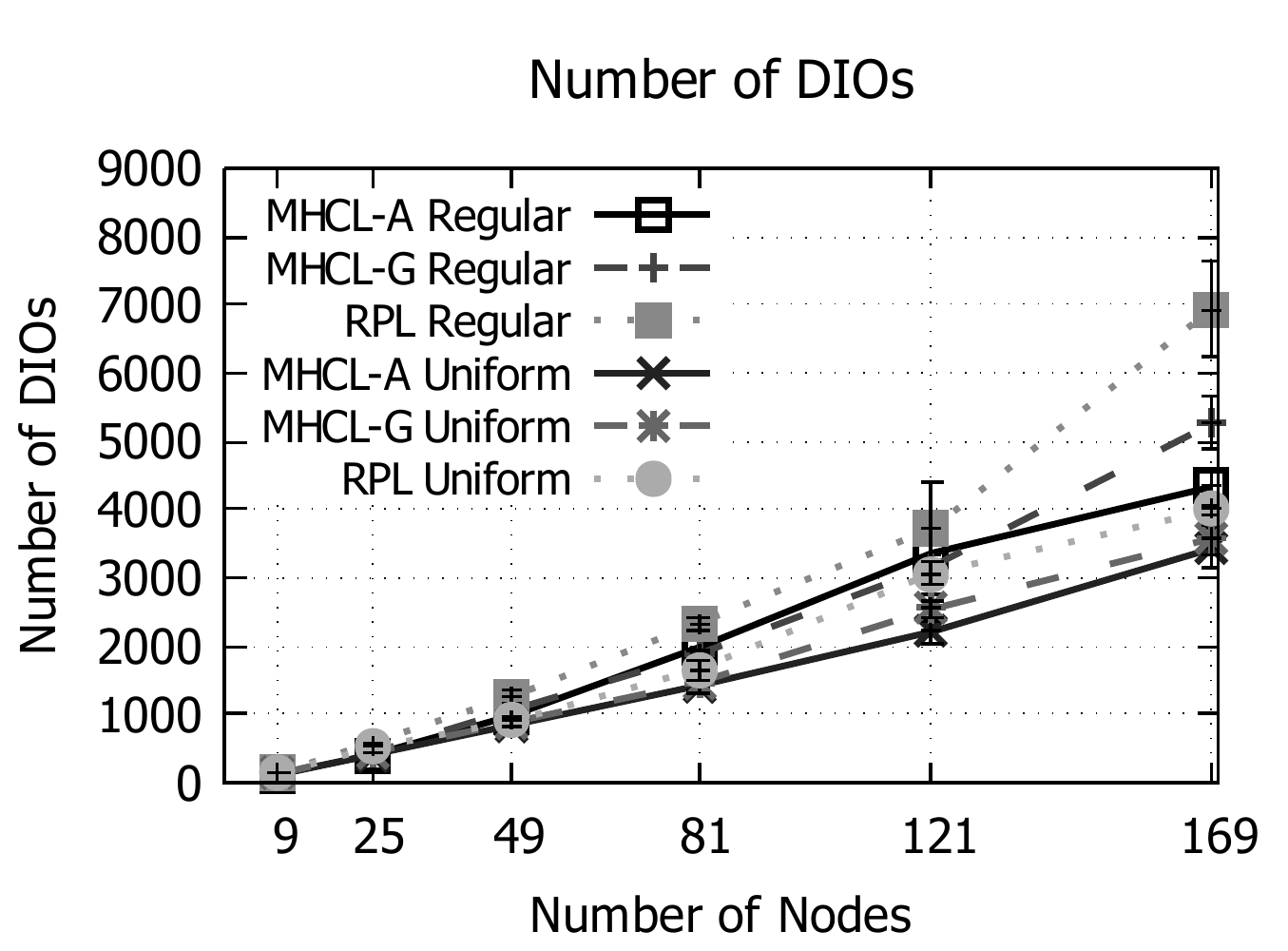}
\caption{Number of DIO messages.}\label{fig:dio}
\end{figure}

\begin{figure}[h]
\begin{center}
  \subfigure[TX Failure]
  {\includegraphics[width=.45\columnwidth]{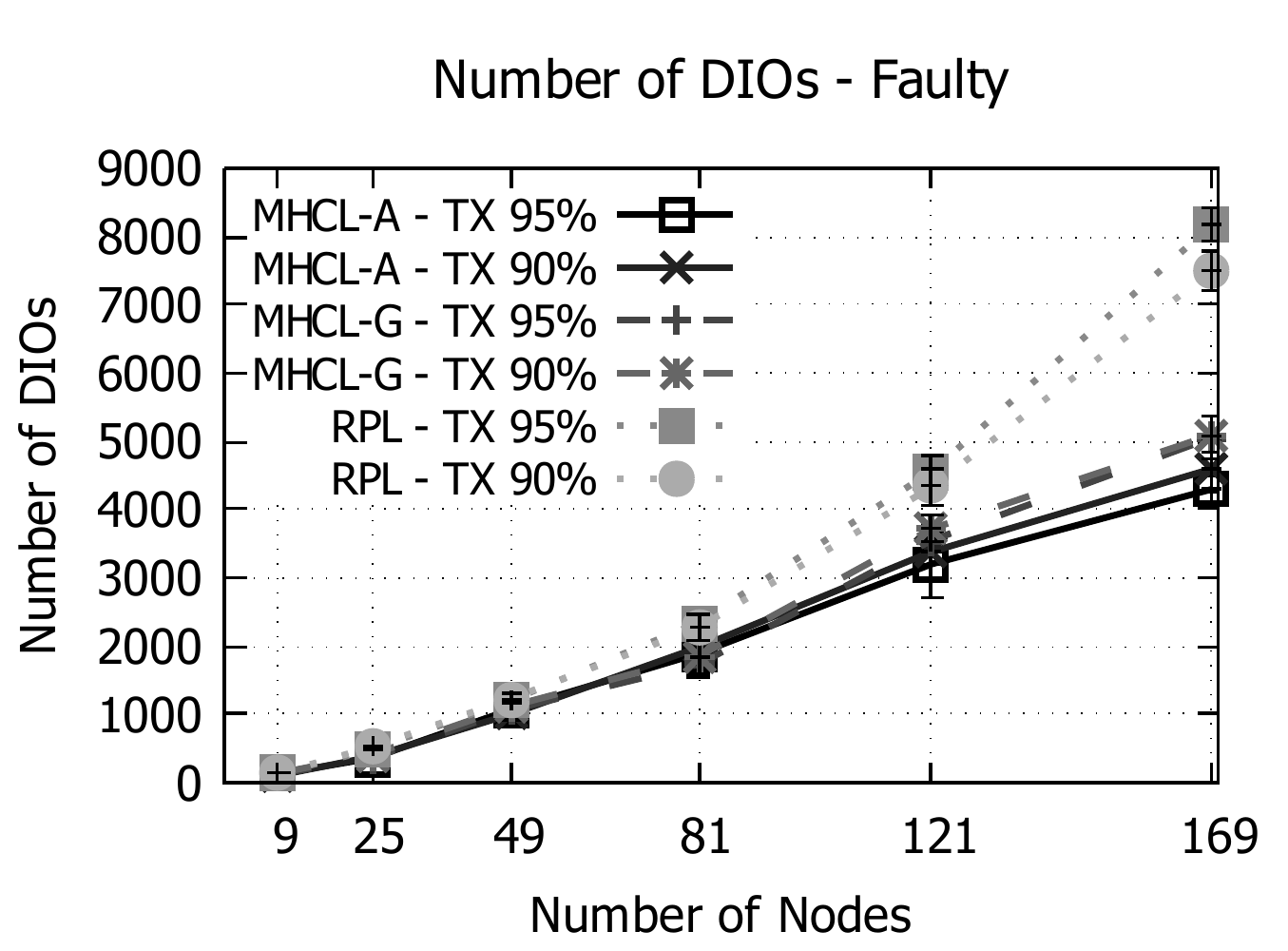}}
%            \hspace*{\fill}
  \subfigure[RX Failure]
            {\includegraphics[width=.45\columnwidth]{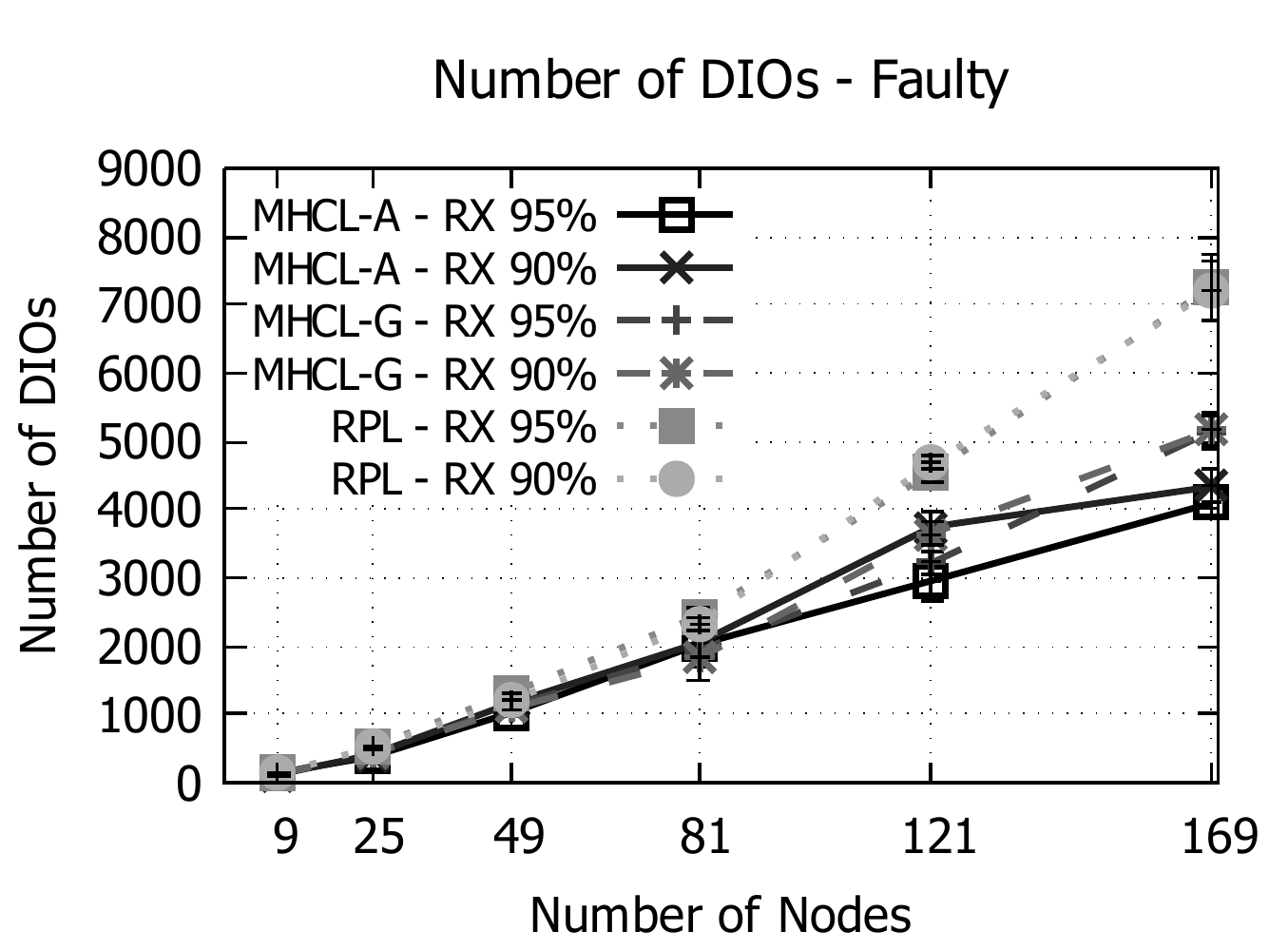}}
            \caption{Number of DIO messages - Faulty - Regular Topology.}\label{fig:dio_f}
\end{center}
\end{figure}

\begin{figure}[h]
\begin{center}
  \subfigure[TX Failure]
  {\includegraphics[width=.45\columnwidth]{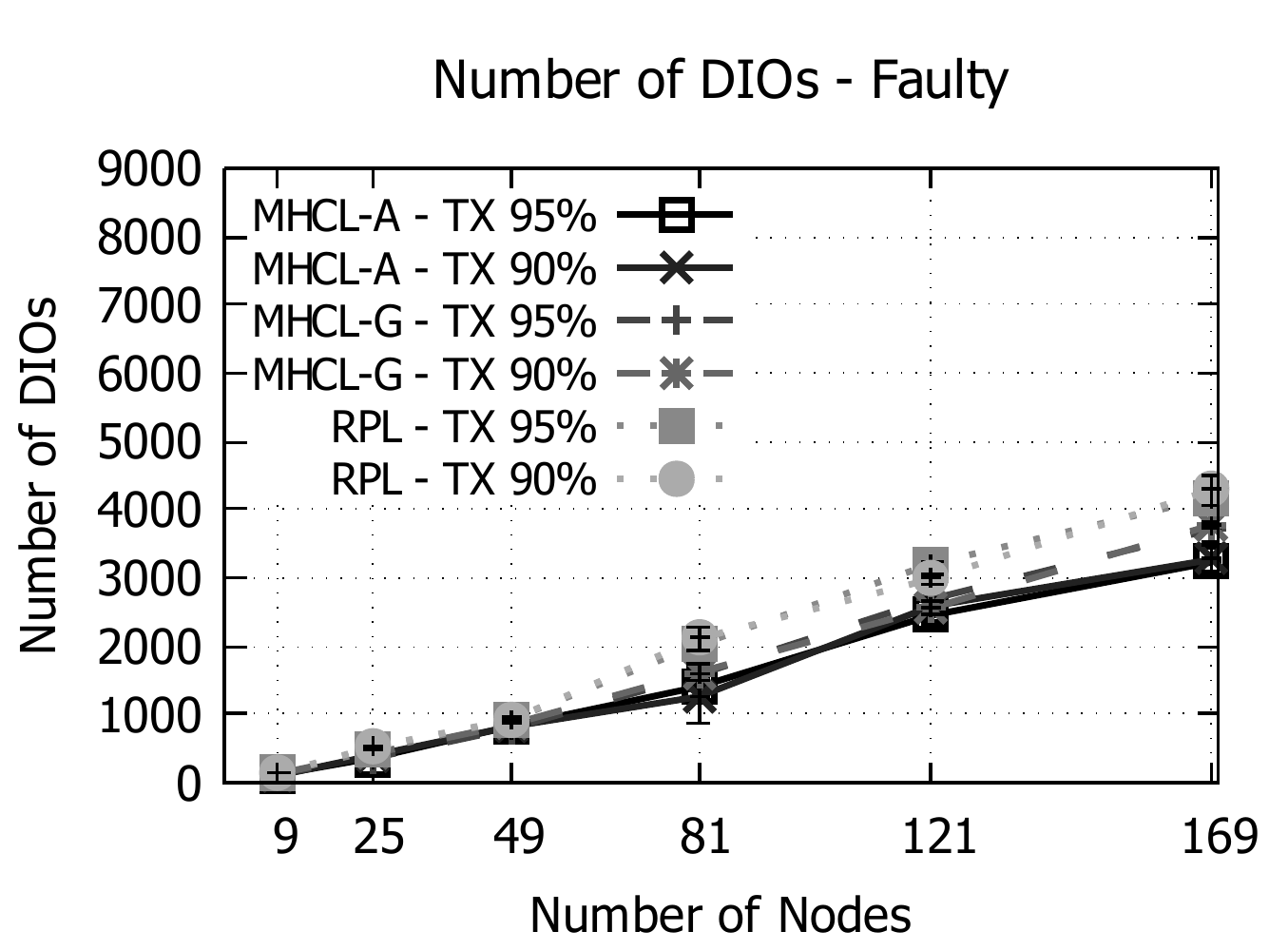}}
%            \hspace*{\fill}
  \subfigure[RX Failure]
            {\includegraphics[width=.45\columnwidth]{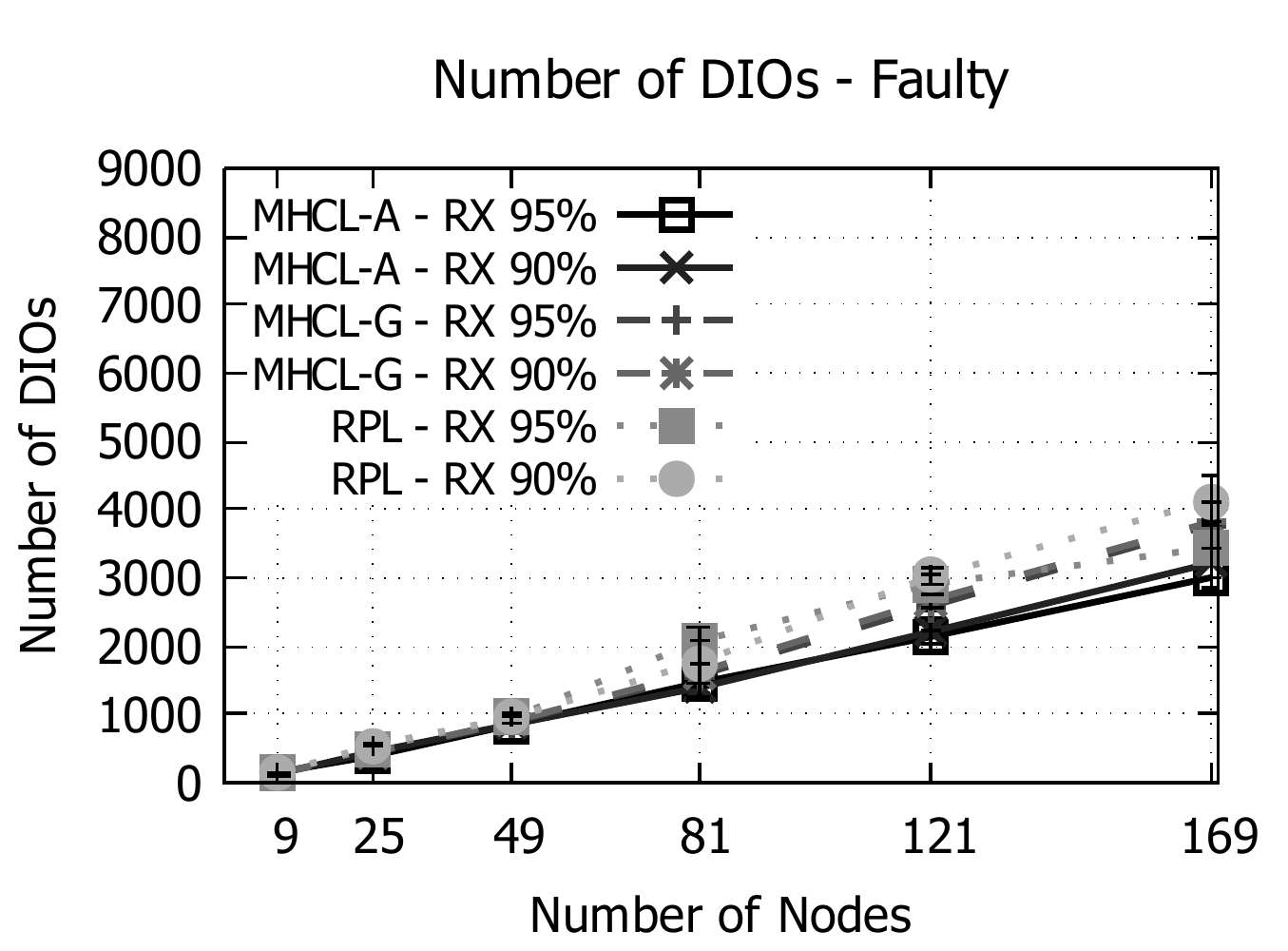}}
            \caption{Number of DIO messages - Faulty - Uniform Topology.}\label{fig:dio_f_u}
\end{center}
\end{figure}

\begin{figure}[t]
\centering
\includegraphics[width=0.6\columnwidth]{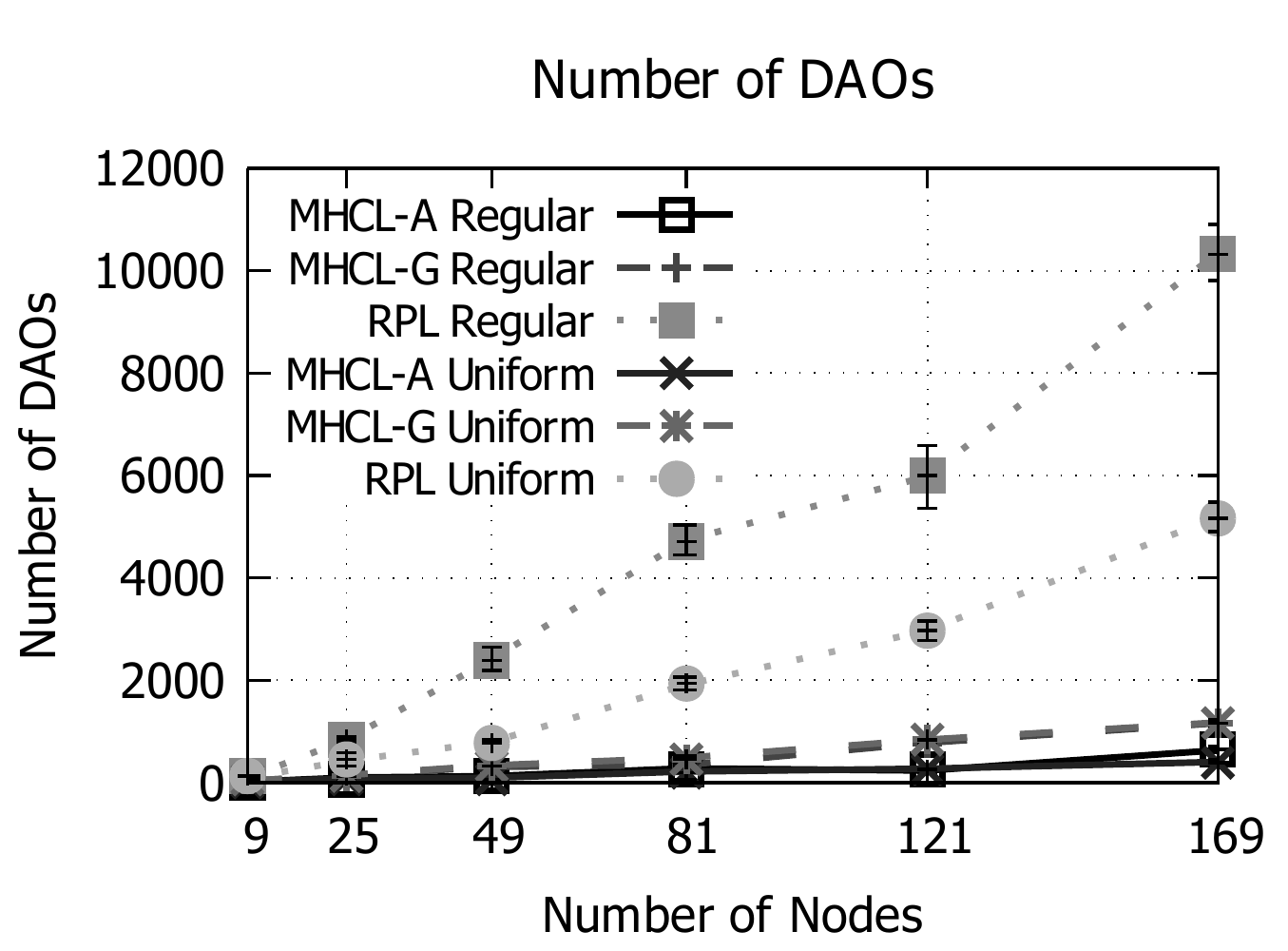}
\caption{Number of DAO messages.}\label{fig:dao}
\end{figure}

\begin{figure}[h]
\begin{center}
  \subfigure[TX Failure]
  {\includegraphics[width=.45\columnwidth]{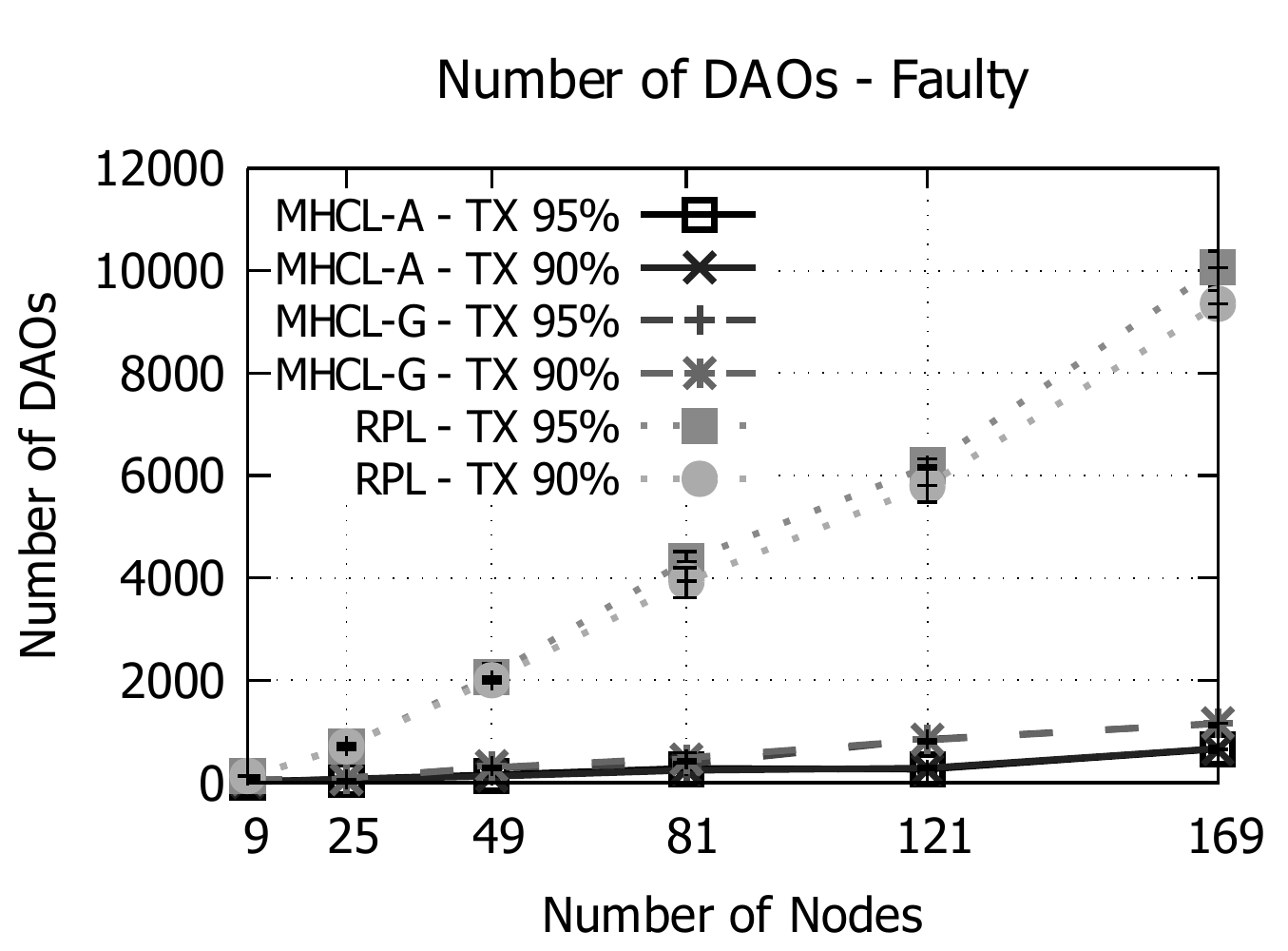}}
%            \hspace*{\fill}
  \subfigure[RX Failure]
            {\includegraphics[width=.45\columnwidth]{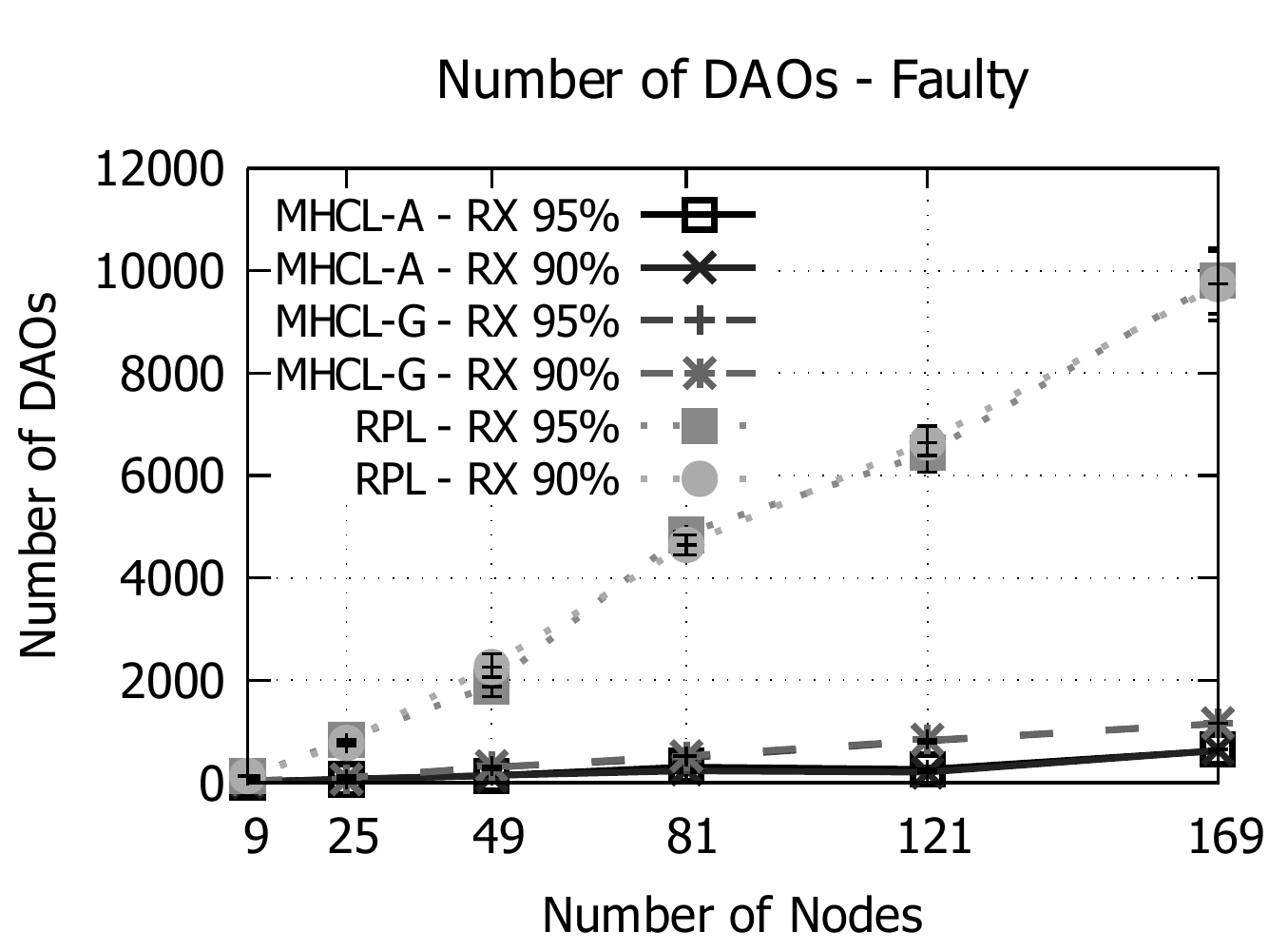}}
            \caption{Number of DAO messages - Faulty - Regular Topology.}\label{fig:dao_f}
\end{center}
\end{figure}

\begin{figure}[h]
\begin{center}
  \subfigure[TX Failure]
  {\includegraphics[width=.45\columnwidth]{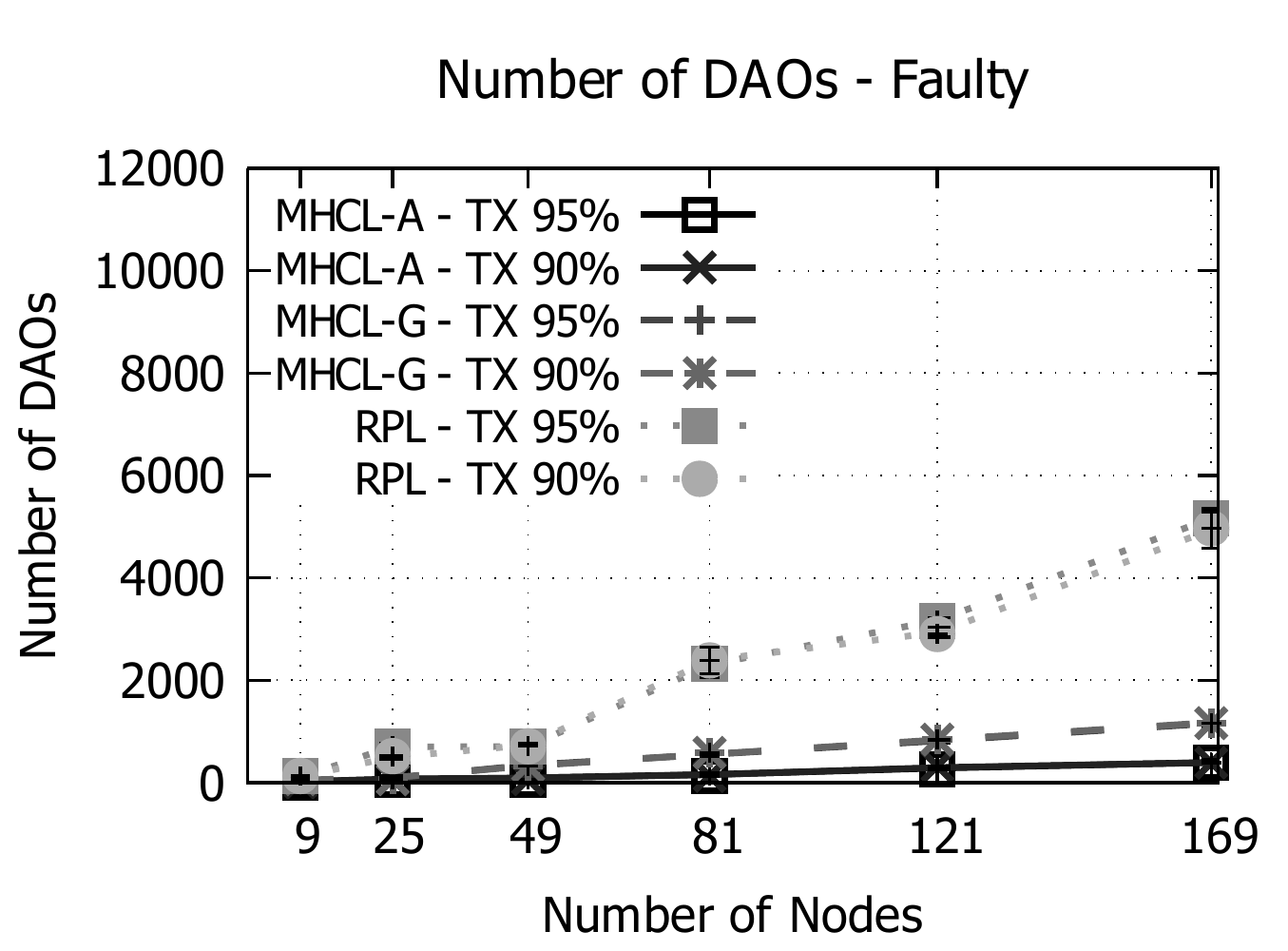}}
%            \hspace*{\fill}
  \subfigure[RX Failure]
            {\includegraphics[width=.45\columnwidth]{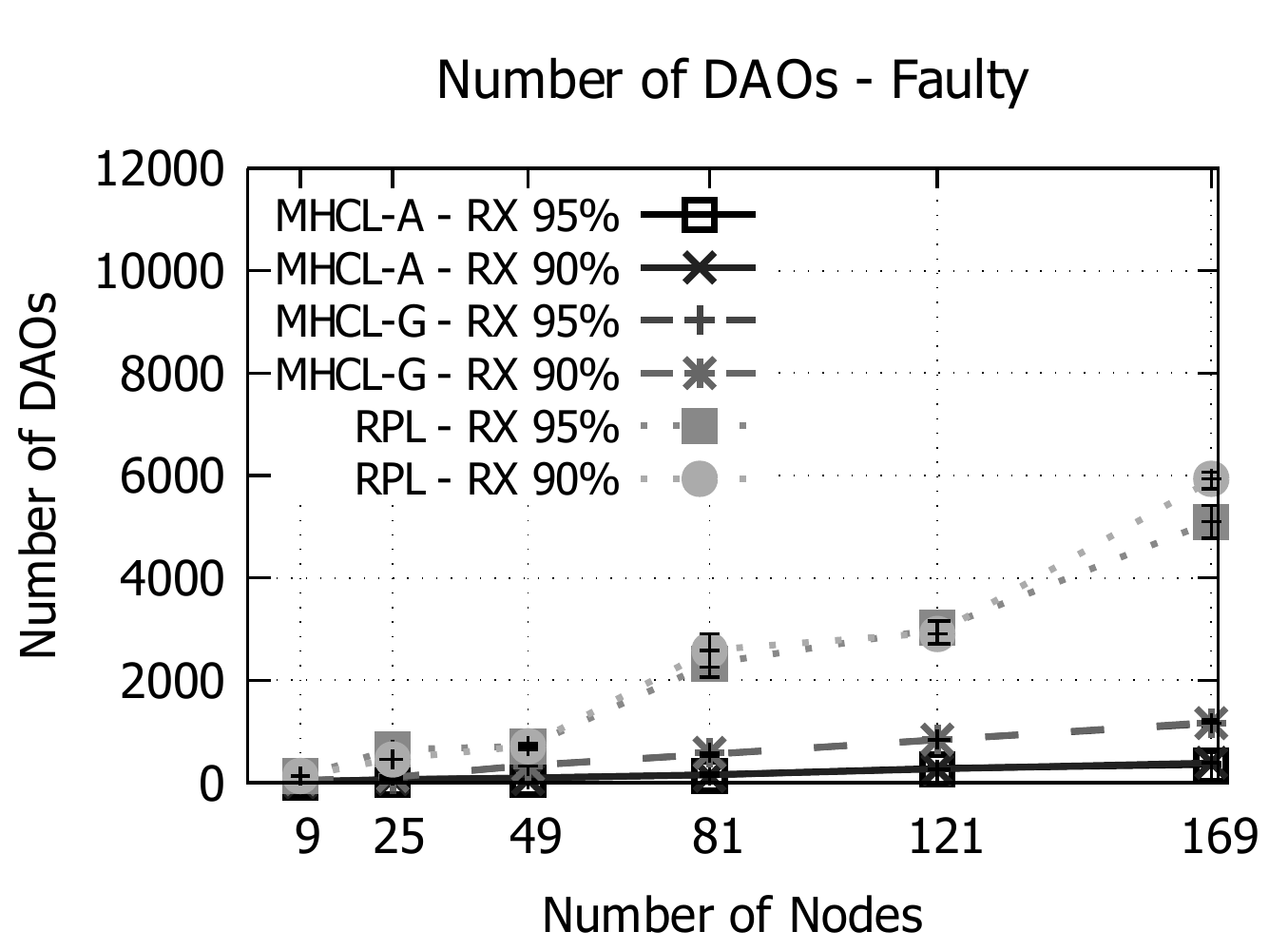}}
            \caption{Number of DAO messages - Faulty - Uniform Topology.}\label{fig:dao_f_u}
\end{center}
\end{figure}

MHCL performs IPv6 address allocation during the network setup phase. Since the network may have
collisions and node and link failures, some $DIO_{MHCL}$ messages can be lost, even using acknowledgments
($DIOACK_{MHCL}$) and retransmissions (in our simulations, up to 3
retransmissions were done). Figures \ref{fig:txend}, \ref{fig:txend_f} and
\ref{fig:txend_f_u} compare the addressing success rate of MHCL-A and MHCL-G in
scenarios without failures and with failures in the regular and uniform topologies.

We can observe that,
despite collisions, the addressing rate is between 70\% and 100\% in the non-faulty
scenario. When intermittent failures of nodes and links are present, however,
that average number decreased slightly, as expected.
With $95 \%$ confidence we can say that MHCL-G and MHCL-A have the same addressing rate,
since there is an overlap between the confidence intervals, in both regular and
uniform topologies.

\begin{figure}[t]
\centering
\includegraphics[width=0.6\columnwidth]{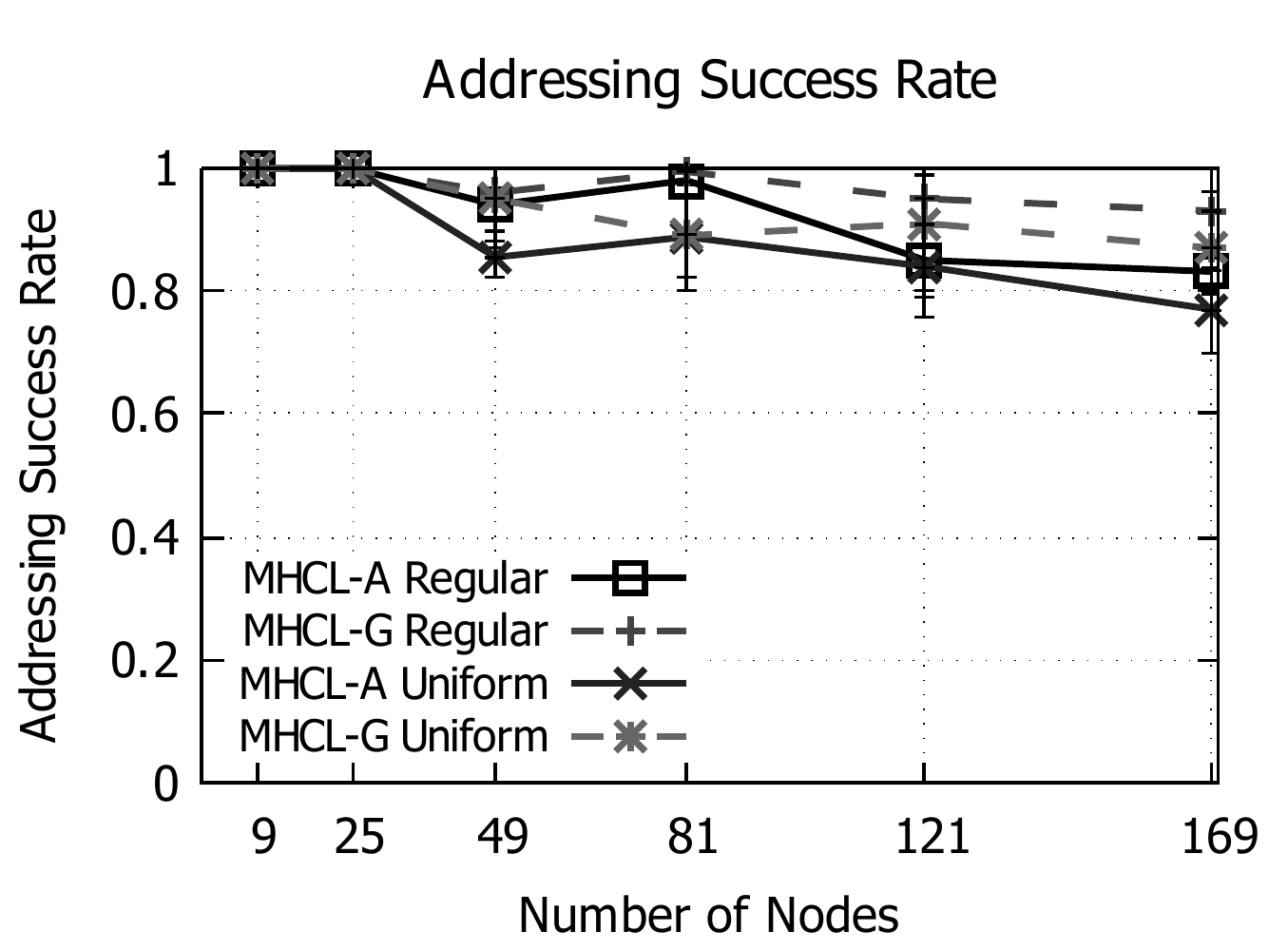}
\caption{Addressing rate.}\label{fig:txend}
\end{figure}

\begin{figure}[h]
\begin{center}
  \subfigure[TX Failure]
  {\includegraphics[width=.45\columnwidth]{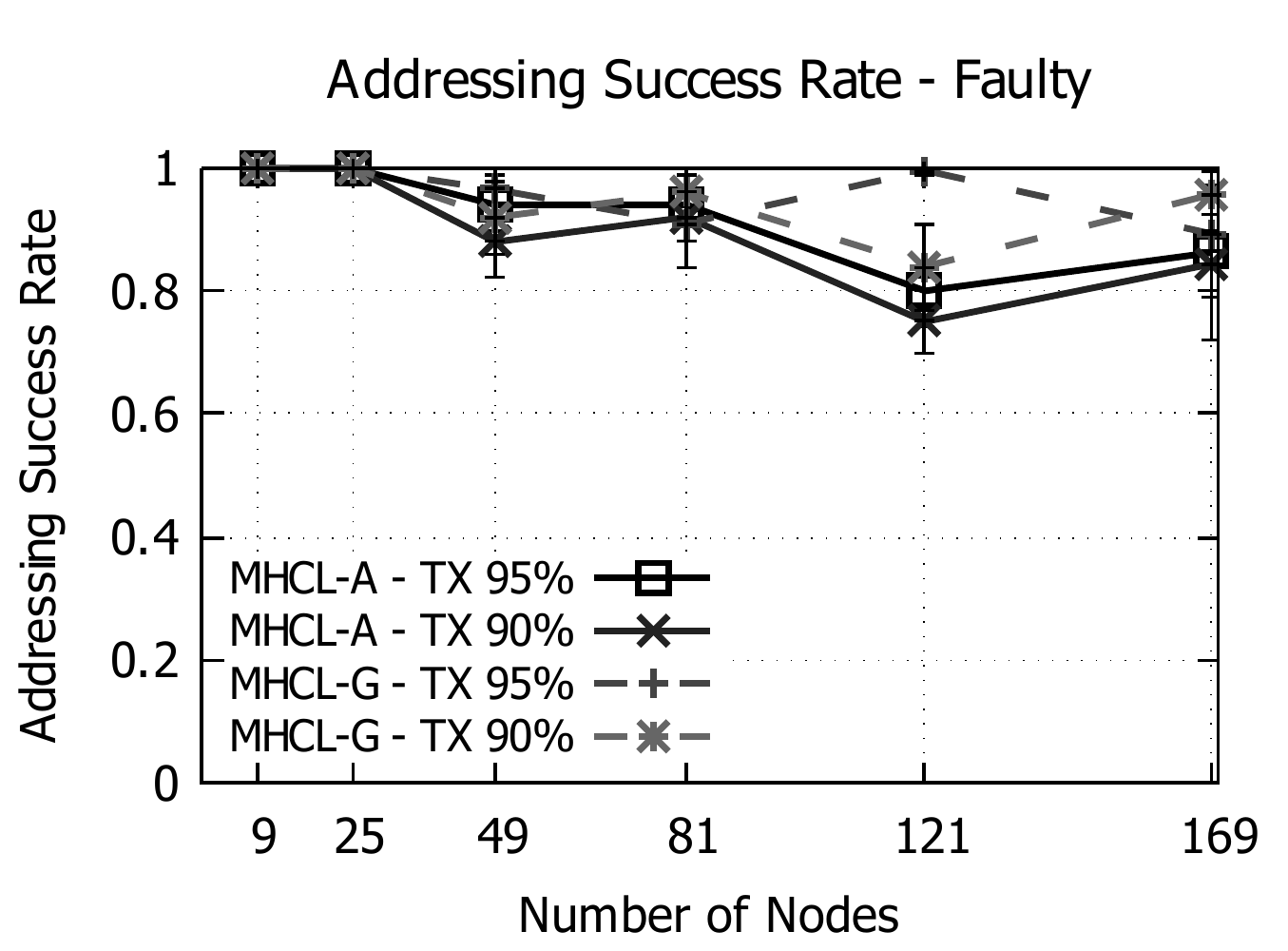}}
%            \hspace*{\fill}
  \subfigure[RX Failure]
            {\includegraphics[width=.45\columnwidth]{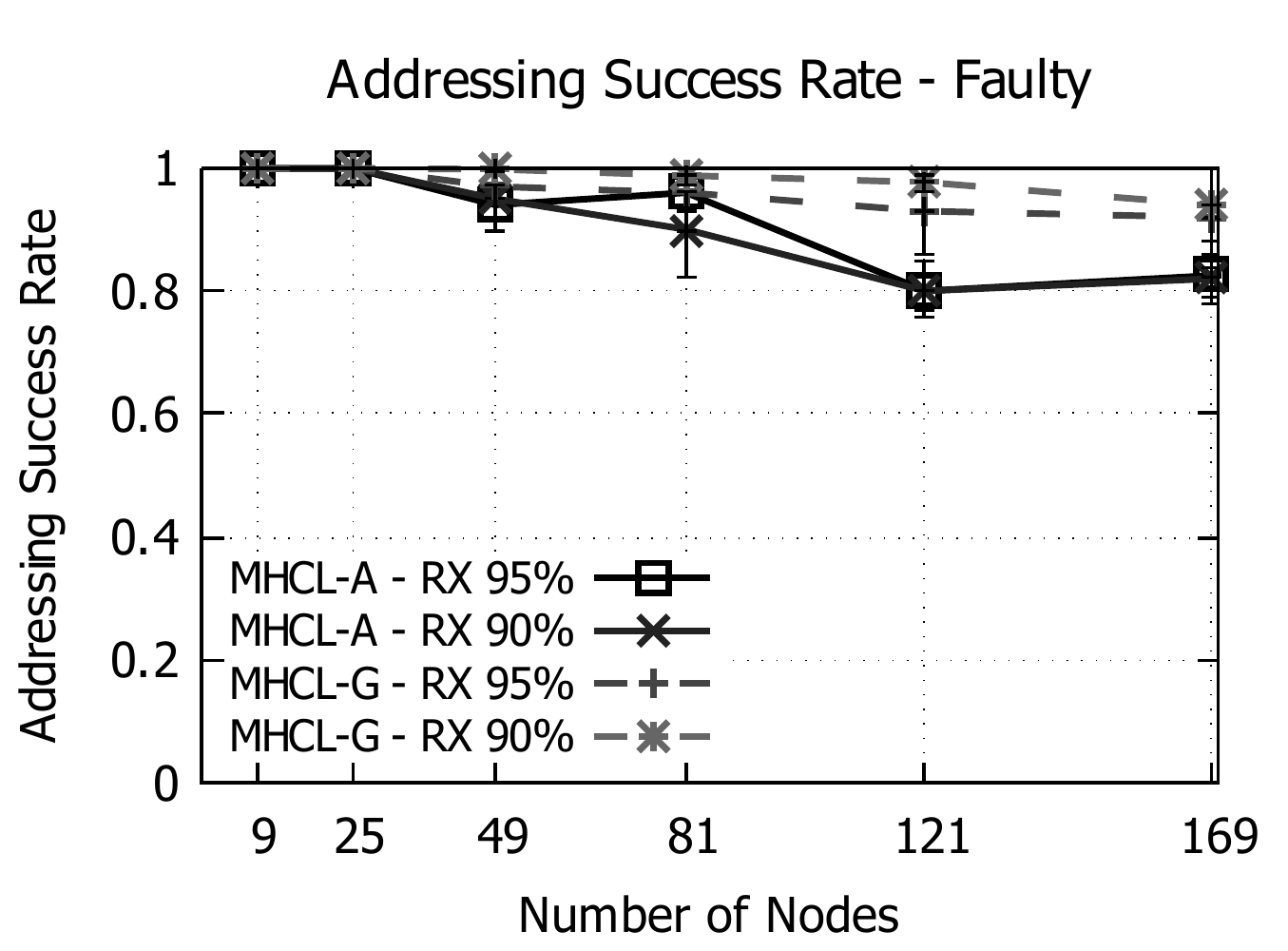}}
            \caption{Addressing rate - Faulty - Regular Topology.}\label{fig:txend_f}
\end{center}
\end{figure}

\begin{figure}[h]
\begin{center}
  \subfigure[TX Failure]
  {\includegraphics[width=.45\columnwidth]{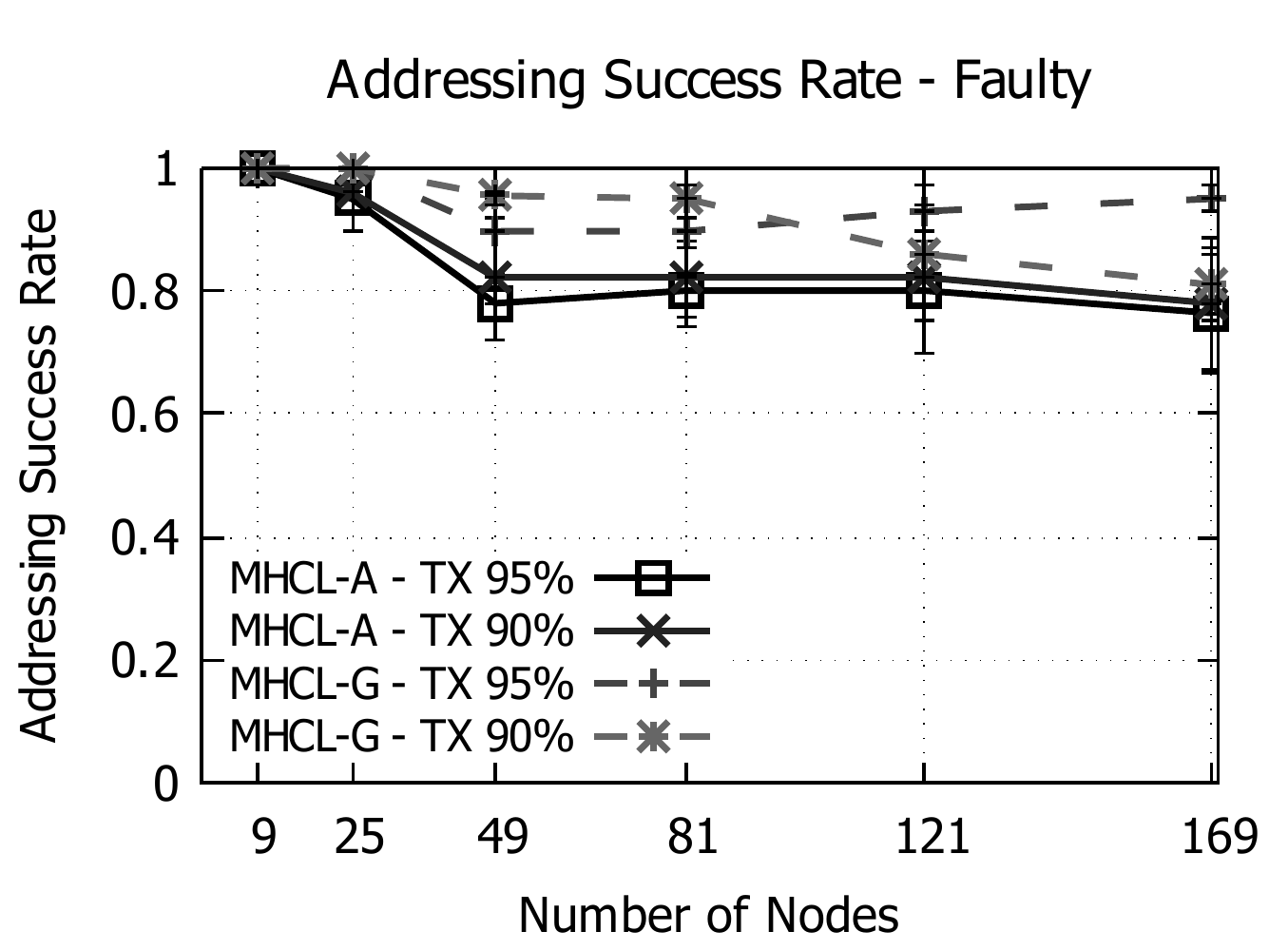}}
%            \hspace*{\fill}
  \subfigure[RX Failure]
            {\includegraphics[width=.45\columnwidth]{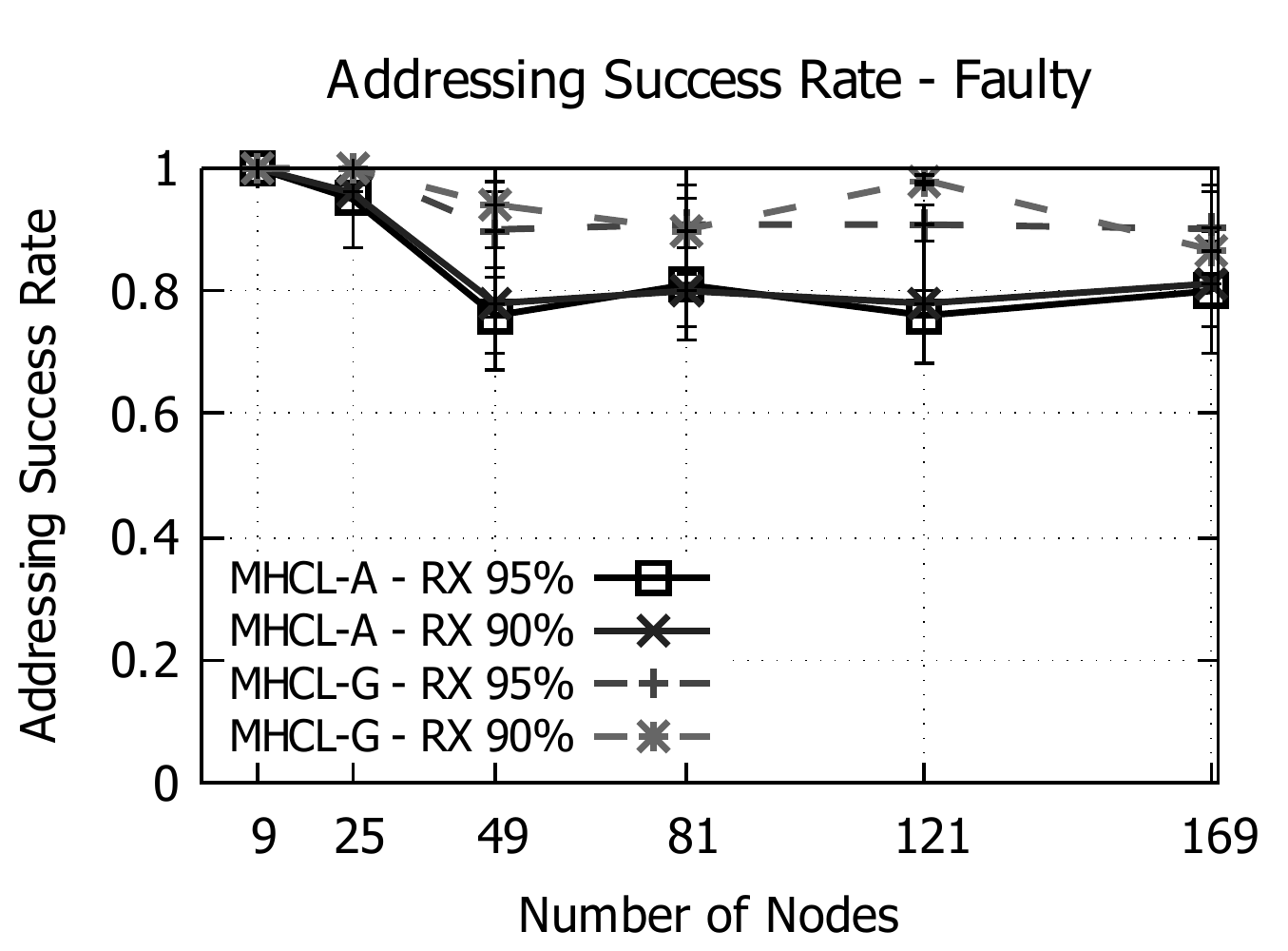}}
            \caption{Addressing rate - Faulty - Uniform Topology.}\label{fig:txend_f_u}
\end{center}
\end{figure}

\subsection{Application Layer Routing}\label{subsec:appLayer}

Finally, we look at the delivery rate of top-down application
messages (Figures 
\ref{fig:txdsc} , \ref{fig:txdsc_f}, and \ref{fig:txdsc_f_u}). Note that the simulated application layer does not perform retransmission of messages and acknowledgments,
therefore, collisions and failures of links and nodes can cause permanent loss of messages. In the case of upward messages, as the routing
algorithm has not changed, the strategies have similar values. 
%However, since
% during the application algorithm RPL still sends DAO type messages, these messages compete for the medium with the upward
%application messages, increasing the chance of collisions.
In the downward, or top-down, routing, the performance difference between MHCL
and RPL is striking. In the 169-node scenario, for instance, MHCL delivers
almost $4\times$ more messages than RPL, on average. This occurs because, in RPL, the nodes do not have enough memory to store the
routing tables needed to address all their descendants, thereby preventing the routing of all
messages, leading to a low downward routing rate. Even though MHCL delivers
significantly more messages relatively to RPL, the overall routing success rate
can be as low as 50\% in some scenarios (RPL's success rate is as low as 10\% in
these cases), even without link or node failures.
This can be explained by high collision rates between concurrent messages and
lower than 100\% addressing success rate of MHCL. Note that this ratio could be improved by
collision-avoidance techniques and additional retransmissions of $DIO_{MHCL}$
messages.

When we compare message delivery success, including address allocation messages,
between regular and uniform topologies, we can see that uniform has slightly
better results than regular. This can be explained by the fact that the height
of the DAG constructed by RPL is smaller in the uniform topology, due to greater
distance variability among nodes. For example, when $n=121$, in the regular
case, $height(DAG)=20$ but, in the uniform case, $height(DAG)=7$ on average.
Given that failure and collision probabilities accumulate over multiple hops
of the path followed by each message, the overall message loss is higher in the
regular network.

\begin{figure}[t]
\centering
\includegraphics[width=0.6\columnwidth]{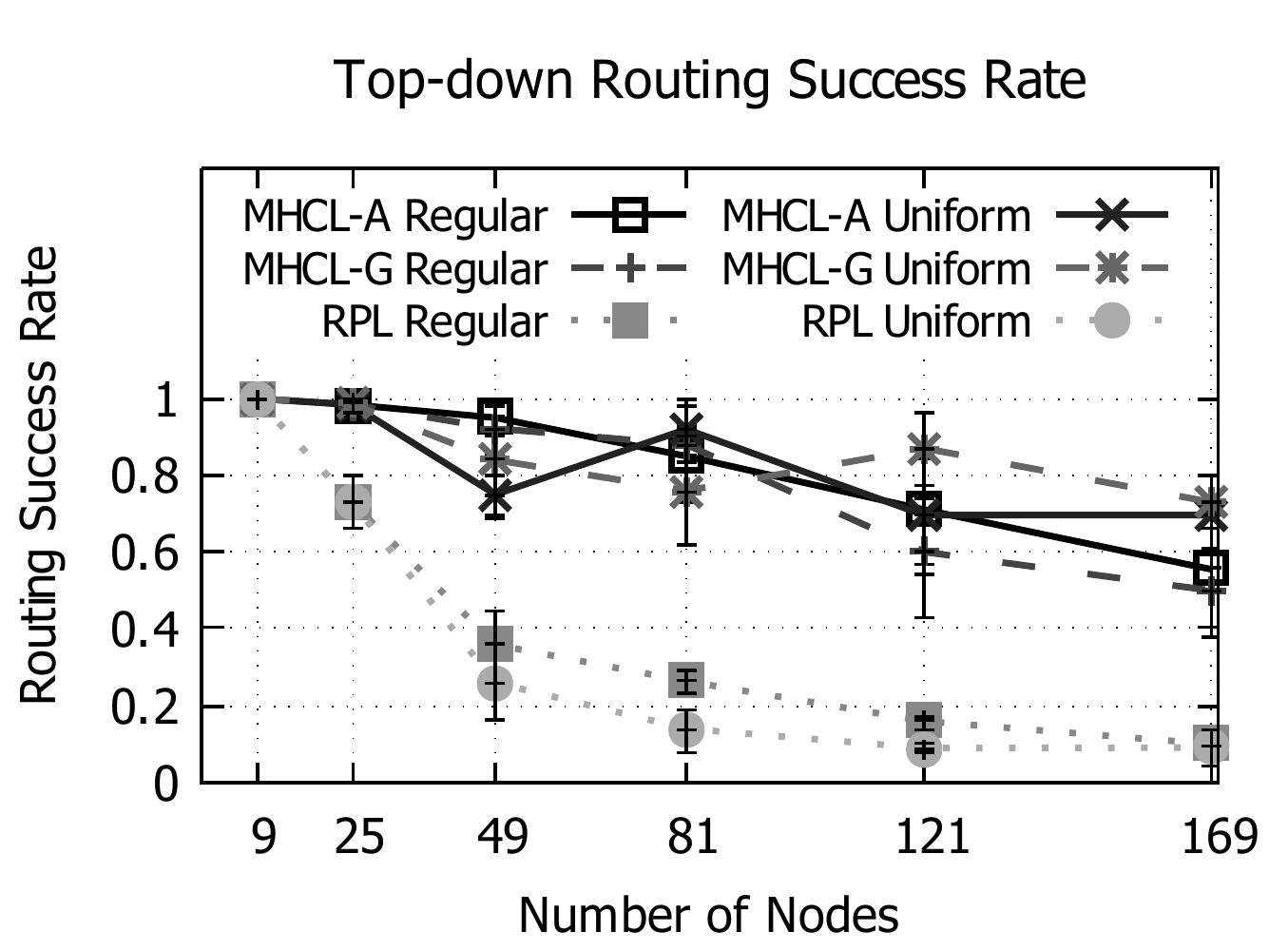}
\caption{Top-down Routing.}\label{fig:txdsc}
\end{figure}

\begin{figure}[h]
\begin{center}
  \subfigure[TX Failure]
  {\includegraphics[width=.45\columnwidth]{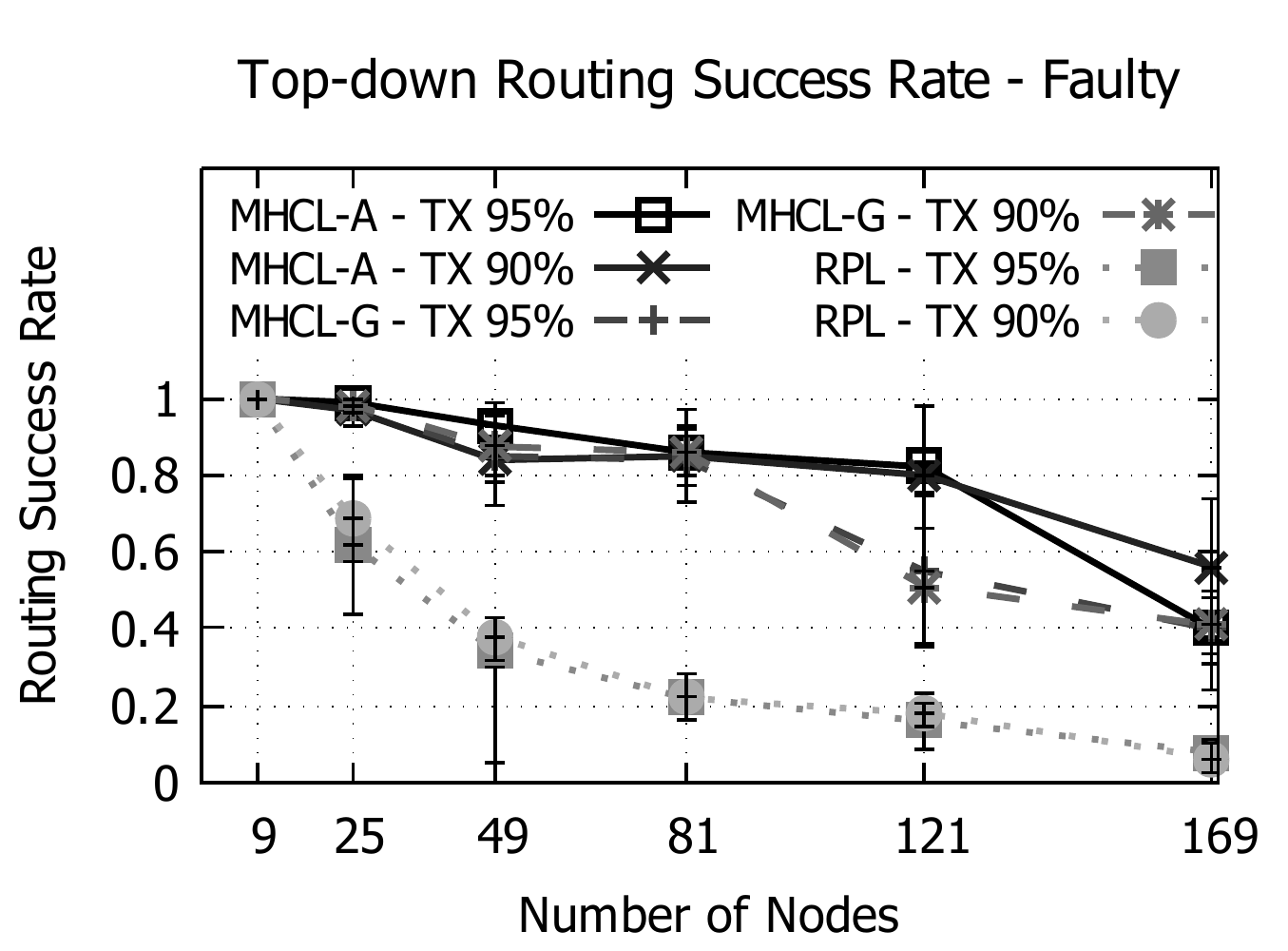}}
%            \hspace*{\fill}
  \subfigure[RX Failure]
            {\includegraphics[width=.45\columnwidth]{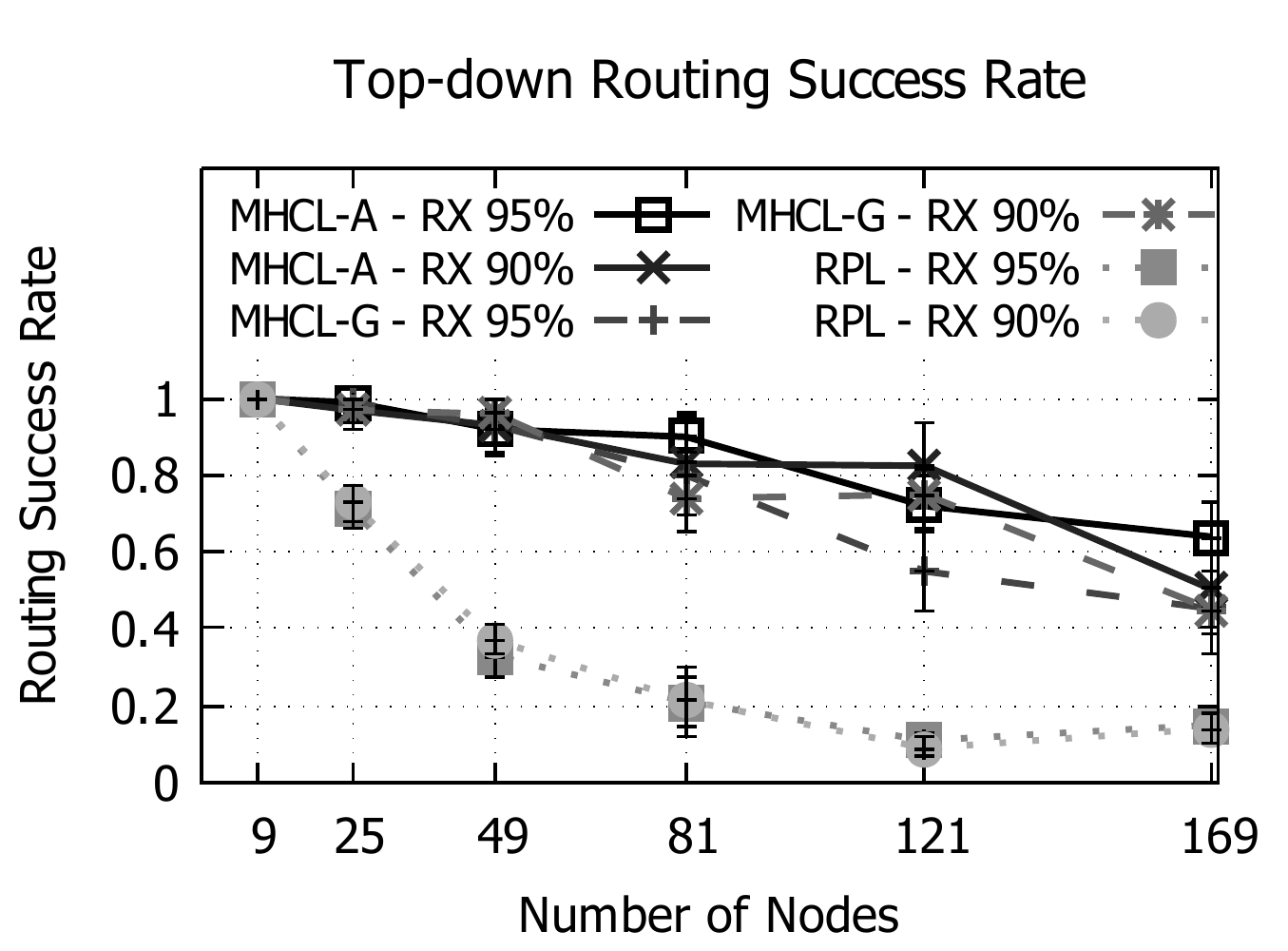}}
            \caption{Top-down Routing - Faulty - Regular Topology.}\label{fig:txdsc_f}
\end{center}
\end{figure}

\begin{figure}[h]
\begin{center}
  \subfigure[TX Failure]
  {\includegraphics[width=.45\columnwidth]{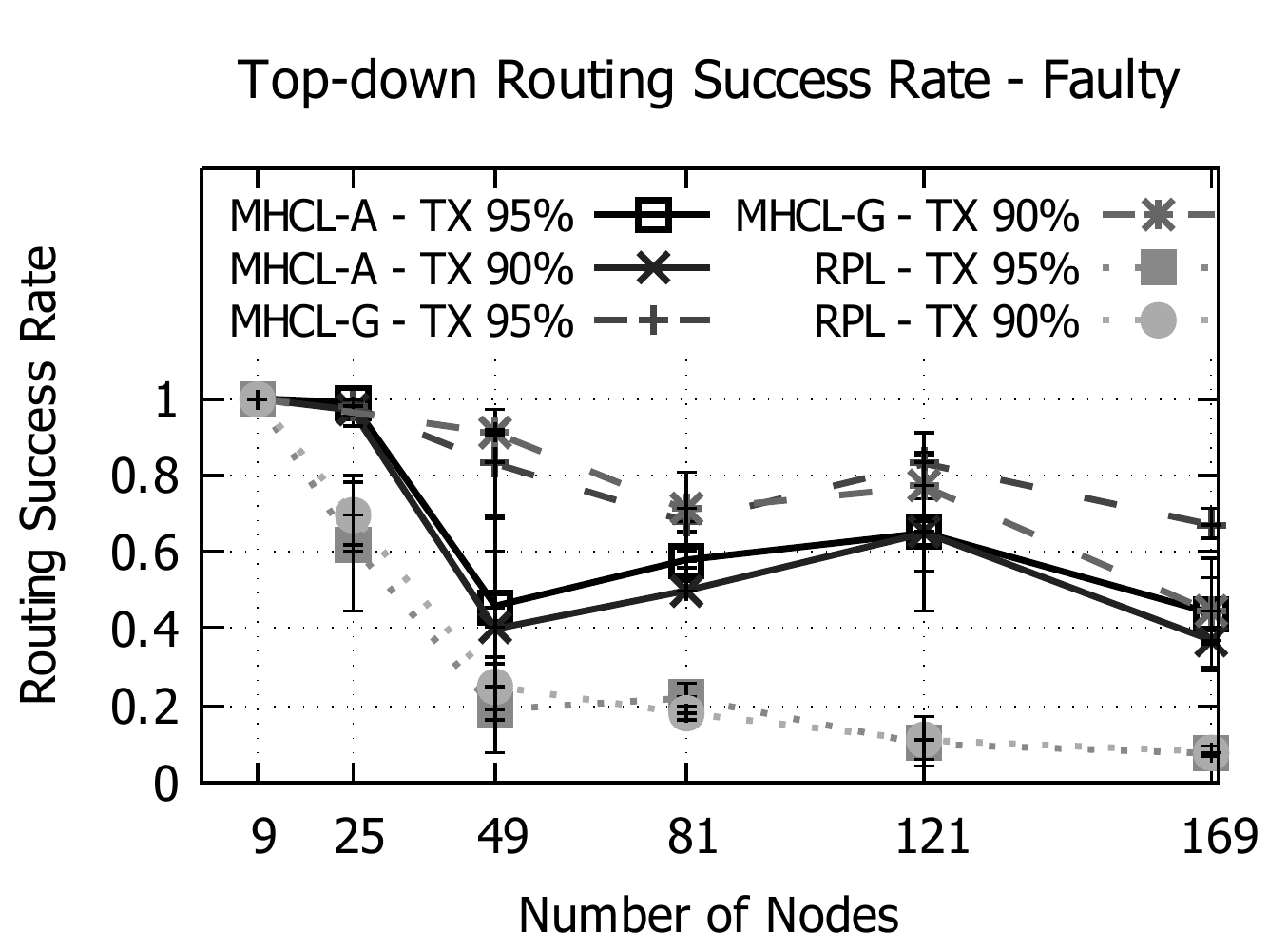}}
%            \hspace*{\fill}
  \subfigure[RX Failure]
            {\includegraphics[width=.45\columnwidth]{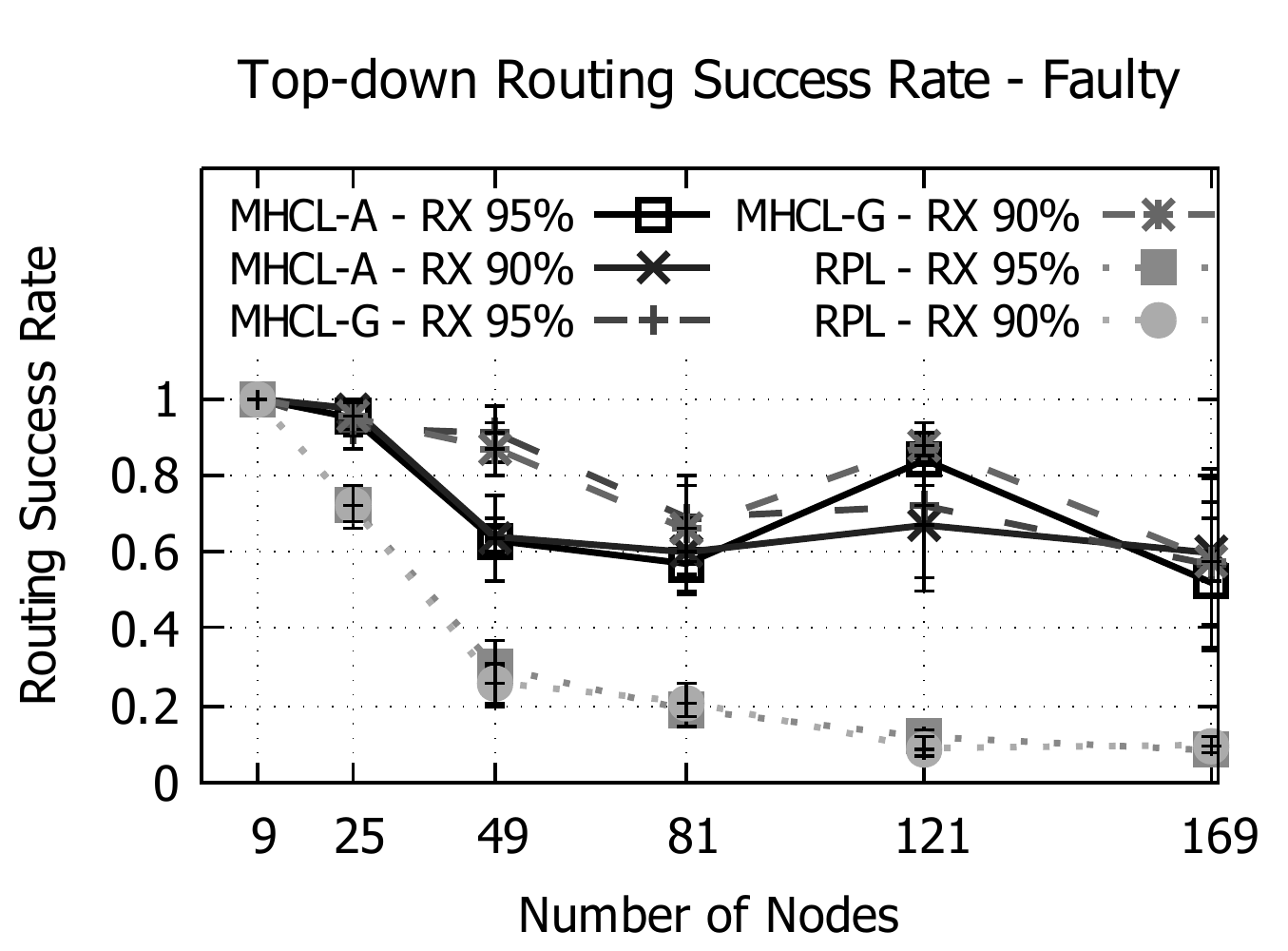}}
            \caption{Top-down Routing - Faulty - Uniform Topology.}\label{fig:txdsc_f_u}
\end{center}
\end{figure}
\section{Related Work}
\label{sec:related}

Standard routing protocols for
Low Power Lossy Networks, such as CTP (Collection Tree
Protocol~\cite{Fonseca:2009}) and RPL (IPv6 Routing Protocol for Low-Power and Lossy Networks~\cite{rfc6550}), are
designed to optimize bottom-up (many-to-one) data flows, by maintaining a tree-based structure. The advantage of a tree topology,
besides loop-avoidance, is a constant-size (i.e., independent of the network size)
routing table, since each node just needs to establish who his parent-node is
and maintain only that information for packet forwarding. The disadvantage of
such structures is that other communication patterns, like top-down
(one-to-many) or one-to-one bidirectional data flows, are not
easily implemented. 
The problem with one-directional routing is it makes it
infeasible to build several useful network functions, such as configuration
routines and reliable mechanisms to ensure the delivery of data end-to-end.
In order to do that, addition communication routines have to be implemented and
extensive routing information has to be inserted into the routing
tables of memory-constrained nodes.
%Therefore, such standard protocols do not allow a straight-forward
% implementation of any-to-any communication patterns.

Some works have adressed this problem from different perspectives~\cite{Rein12,
Duque13, xctp, mhclSBRC}. \cite{Rein12} presents CBFR, a routing scheme
that builds upon collection protocols to enable point-to-point communication.
Each node in the collection tree stores the addresses of its direct and indirect
child nodes using Bloom filters to save memory
space.
\cite{Duque13} presents ORPL, which also uses bloom filters and brings opportunistic routing to RPL to decrease control traffic overload. Both
protocols suffer from false positives problem, that arises from the use of Bloom
filters. Furthermore, \cite{xctp} presents XCTP, an extension of CTP, which uses
opportunistic and reverse-path routing to enable bi-directional communication in
CTP. XCTP is efficient in terms of message overload, but exhibits the problem of
high memory footprint, since each node needs to store an entry in the local
routing table for every data flow going through that node.

As opposed to reactive strategies, such as ORPL and XCTP, MHCL consists of a
proactive approach, which is based on a
hierarchical address allocation scheme, aimed at reducing routing table sizes
needed for top-down data flows. In~\cite{mhclSBRC} we discussed a few
preliminary tests with an early implementation of MHCL in smaller and more
restricted network scenarios.

In \cite{Pan08} the so-called long-thin (LT) network topology is presented. In
this kind of topology, a network may have a number of linear paths of nodes as backbones connecting to each other. From real
experiments, they observe that such topology is quite general in many
applications and deployments. Given that, they analyze how the address
assignment strategy and tree routing scheme defined in ZigBee fails in this
topology. To solve this problem, they propose a new address assignment and
routing scheme for LT WSN. \cite{Rein12} presents CBFR, a routing scheme
that builds upon collection protocols to enable point-to-point communication. To
do that, each node in the collection tree stores the addresses of its direct and
indirect child nodes. Since memory is a scarce resource, this information is
stored in Bloom filters, a space-efficient data structure.
Finally, \cite{Duque13} presents ORPL. ORPL brings opportunistic routing to RPL,
aiming for low-latency, reliable communication in duty-cycled networks. To route
upwards, at the MAC layer, ORPL uses anycast over a low-power-listening MAC.
Since the opportunistic routing uses anycast, nodes do not need to choose a next
hop and therefore do not need a traditional routing table. The authors introduce
the notion of routing set, the sub-DODAG rooted at the node. ORPL supports
any-to-any traffic by first routing upwards to any common ancestor, and then
downwards to the destination, as RPL downward mode. ORPL uses the same data
structure to store route information as CBFR, a bloom filter. However, in ORPL,
instead of choosing a next hop, nodes anycast packets. When receiving a packet,
nodes decide whether to forward it or not, and send a link-layer acknowledgment only if they choose to act as next hop.

There are several studies evaluating the specification and operation of the RPL protocol. In~\cite{Ko2012} it was pointed out how
challenging it is to make the two modes of downward routing, storing and
non-storing, operate simultaneously. To solve interoperability problems, the
authors propose that nodes operating in storing mode should be able to work with source routing
headers, used in non-storing mode, and nodes operating in non-storing
mode should send hop-by-hop route notification messages, rather than end-to-end. Other challenges of RPL were exposed in
\cite{Clausen2011}, such as lack of specification of some parts of the protocol.
For example, while DIO messages use Trickle to specify the timing of messages,
details about DAO type messages are obscure and no timer for these messages has
been defined. 

In \cite{RPLtinyos} and \cite{RPLContiki} the RPL protocol was
evaluated according to its implementation in TinyOS and ContikiOS, respectively.
In \cite{RPLtinyos}, 
an implementation of RPL is presented, called TinyRPL, which uses the Berkeley Low-power IP
stack (BLIP) for TinyOS \cite{Levis2005}. The TinyRPL implementation has all the
basic mechanisms presented in RPL definition, while omitting all optional ones. For validation, the RPL
protocol was compared with the Collection Tree Protocol (CTP)
\cite{Fonseca:2009}.
One of the benefits of RPL, compared with the CTP, is the possibility of various types of traffic (P2P, P2MP
and MP2P), and its ability to connect nodes to the Internet directly, exchanging
packets with global IPv6 addresses. In \cite{RPLContiki} the performance of RPL
protocol in the operating system Contiki OS~\cite{Dunkels:2004} was studied. The
authors evaluated both the setup and the data collection phases of the protocol,
giving insights into metrics, such as signaling overhead, latency and energy
consumption.

\section{Conclusions}\label{sec:conclusion}

In this paper we proposed MHCL: IPv6 Multihop Host Configuration for Low-Power
Wireless Networks.
The main advantages of MHCL are:
(1) Memory efficiency: using the cycle free topology of the network to perform a
hierarchical partitioning of the address space available at a border router, the
addresses of nodes belonging to the same sub-tree are grouped into a single
entry in the node's routing table, making the size of the table $O(k)$ where $k$
is the number of the node's direct children; this is opposed to size $O(n)$ of
the RPL's downward routing tables, where $n$ is the total number of descendants
of each node; (2) Time efficiency: MHCL's main routines are based on timers that rapidly adapt
to network dynamics, enabling a fast address configuration phase; when compared to RPL, the
network setup time was statistically equivalent; (3) Number of control messages:
the number of messages sent by MHCL is controlled by Trickle-based timers,
which avoid flooding the network when network topology is
stable; in comparison with RPL, the number of DIO messages was statistically the
same and that of DAO messages was significantly lower; (4) Efficiency in the delivery of application messages:
downward message delivery success of MHCL was significantly higher
than that of the RPL protocol in all simulated scenarios.

\begin{table}[htbp]
\caption{Stabilization parameter values} 
\label{tab:stabilizationParams}
 \centering
\small
 \begin{tabular}{@{}llllll@{}}
\toprule
  % after \\: \hline or \cline{col1-col2} \cline{col3-col4} ...
  Parameter & spChild & spParent & spLeaf & spRoot & $DIO_{min}$\\
	\midrule
  Value & 2 & 4 & 4 & 8 & 6 \\
	\bottomrule
 \end{tabular}
 \end{table}

\end{document}